\numberwithin{equation}{section}
\newcommand\new[1]{}
\newtheorem{Theorem}{Theorem}[section]
\newtheorem{Definition}[Theorem]{Definition}
\newtheorem{Proposition}[Theorem]{Proposition}
\newtheorem{Lemma}[Theorem]{Lemma}
\newtheorem{Remark}[Theorem]{Remark}
\newcommand{\no}{\nonumber}
\newcommand{\noi}{\noindent}
\newcommand{\pa}{\partial}
\newcommand{\ds}{\displaystyle}
\newcommand{\RR}{\mathbb{R}}
\newcommand{\si}{\sigma}
\newcommand{\la}{\lambda}
\newcommand{\vf}{\varphi}
\newcommand{\vp}{\varepsilon}
\newcommand{\Int}{\displaystyle\int\limits}
\date{}
\begin{document}
	
	\pagenumbering{arabic}
	
	\author{Milan Kumar Das}
	\address{IISER Pune, India}
	\email{milankumar.das@students.iiserpune.ac.in}
	\thanks{The first author acknowledges SRF grant of UGC}
	
	\author{Anindya Goswami}
	\address{IISER Pune, India}
	\email{anindya@iiserpune.ac.in}\thanks{The second author acknowledges SERB Grant EMR/2016/004810}
	
	\author{Nimit Rana}
	\address{University of York, UK}
	\email{nr734@york.ac.uk}
	
	\title[Risk Sensitive Optimization in a Switching Jump Diffusion Market]{Risk Sensitive Portfolio Optimization in a Jump Diffusion Model with Regimes}
	
	\addtocounter{footnote}{-1} \vskip 1 true cm
	
	\begin{abstract}
		This article studies a portfolio optimization problem, where the market consisting of several stocks is modeled by a multi-dimensional jump diffusion process with age-dependent semi-Markov modulated coefficients. We study risk sensitive portfolio optimization on the finite time horizon. We study the problem by using a probabilistic approach to establish the existence and uniqueness of the classical solution to the corresponding Hamilton-Jacobi-Bellman (HJB) equation. We also implement a numerical scheme to investigate the behavior of solutions for different values of the initial portfolio wealth, the maturity, and the risk of aversion parameter. \end{abstract}

	\maketitle
	
	{\bf Key words} Portfolio optimization,
	jump diffusion market model, semi-Markov switching, risk sensitive criterion, finite horizon
	
	{\bf AMSC: } 91G10, 93E20, 60K15, 60H10
	

	\section{Introduction} \label{sec:Intro}
	\noi Following the seminal work of Markowitz \cite{MA}, the 	problem of optimization of an investor's portfolio based on 	different criteria and market assumptions are being studied by 	several authors. In the mean-variance optimization approach, as 	done by Markowitz, either the expected value of the portfolio wealth 	is optimized by keeping the variance fixed, or the variance is 	minimized by keeping the expectation fixed. Though the Markowitz's 	mean-variance approach to the portfolio optimization is immensely useful in 	practice, its scope is limited by the fact that only Gaussian 	distributions are completely determined by their first two 	moments. In a pioneering work, Merton \cite{M1}, \cite{M2} has 	introduced the utility maximization to the optimal portfolio 	selection. Merton's approach is based on applying the method of 	stochastic optimal control via an appropriate Hamilton-Jacobi- 	Bellman (HJB) equation. The corresponding optimal dynamic 	portfolio allocation can also be obtained from the same equation. Although this approach has greater mathematical tractability but does not capture the tradeoff between maximizing expectation and minimizing the variance of the portfolio value. 
	
	There is another approach, namely the risk sensitive optimization where  a tradeoff between the long run expected growth rate and the asymptotic variance is captured in implicitly. The aforesaid utility maximization method can be employed to study the 	risk-sensitive optimization by choosing a parametric family of exponential utility functions. In such optimization, an appropriate value of the parameter is to be chosen by the investor depending on the investors degree of risk tolerance. We refer 	\cite{bp1}, \cite{fleming1}, \cite{fleming2}, \cite{LE} for this 	criterion under the geometric Brownian motion (GBM) market model.
	
	Risk sensitive optimization of portfolio value in a more general type of market is also studied by various authors. The jump diffusion model is one of such generalizations which captures the discontinuity of asset dynamics. The empirical results support such models \cite{DMV}. Terminal utility optimization problem under such a model assumption is studied in \cite{KJ}. In all these references, it is assumed that the market parameters, i.e., the coefficients in the asset price dynamics, are either constant or deterministic functions of time. We study a class of models where these parameters are allowed to be finite state pure jump processes. We call each state of the coefficients as a regime and the dynamics, as a regime switching model. The regime switching can be of various types. It is known that for a Markov switching model, the sojourn or holding times in each state are distributed as exponential random variables, whereas the holding time can be any positive random variable for the semi-Markov case. Thus the class of semi-Markov processes subsumes the class of Markov chains. There are some statistical results in the literature (see \cite{BU}, \cite{JH} and the references therein), which emphasize the advantage of the applicability of semi-Markov switching models over simple homogeneous Markov switching models. It is mainly useful to deal with the impact of a changing environment, which exhibits duration dependence. To understand this, consider a market situation where, if the volatility of a certain stock price remains low for longer than certain duration, then that observation discourages increasingly more traders to trade on that, depending on the length of the duration. In that case, this type of duration dependence mass-trading behavior might cause further low volume trading resulting in lack of volatility boost. In this type of market behavior, the density function of holding time of low volatility regime should exhibit heavier tail than exponential. It is important to note that, a Markov chain either time homogeneous or inhomogeneous, does not exhibit such age-dependent transition, whereas a generic semi-Markov process may exhibit this phenomenon. It motivates us to consider the age-dependent transition of the regimes.
	
Risk sensitive portfolio optimization in a GBM model with Markov regimes is studied in \cite{GGS10} 	whereas \cite{GGS09} studies the same problem in a semi-Markov modulated GBM model. In \cite{GGS09} the market parameters, 	$r$, $\mu^l$ and $\sigma^l$ are driven by a finite-state semi-Markov process $\{X_t\}_{t\ge 0}$, where $\mu^l$ 	and $\sigma^l$ denote the drift and volatility parameters of the $l$-th asset in the portfolio. Strictly speaking, 	the assumption  that all the parameters from different assets are governed by a single semi-Markov process is 	 rather restrictive. Ideally, those could be driven by independent or correlated processes in practice. 	Although two independent Markov processes jointly becomes a Markov process, the same phenomena is not 	true for semi-Markov processes. For this reason, the case of independent regimes are important where regimes are not Markov. 	

In general, a pure jump process need not be a semi-Markov process. In particular, the class of 	age-dependent processes (as in \cite{GhS}) is much wider than the type of age independent semi-Markov processes 	studied in \cite{GGS09}. In a recent paper \cite{GPS}, option pricing is studied in a switching market where the regimes are assumed to be an age-dependent process. An age-dependent process $X=\{X_t\}_{t\ge 0}$ on a finite state space $\mathcal{X}:=\{1,2, \ldots,k\}$ is specified by its instantaneous transition rate $\la$, which is a collection of measurable functions 	$\la_{ij}:[0, \infty)\to (0, \infty)$ where $(i,j) \in \mathcal{X}_2$ and $\mathcal{X}_2:=\{(i,j)|i\neq j \in \mathcal{X}\}$. Indeed, 	embedding $\mathcal{X}$ in $\mathbb{R}$, an age-dependent process $X$ on $\mathcal{X}$ is defined as the strong solution to the following system 	of stochastic integral equations (SIEs)
	\begin{equation}\label{agedep}
	\left.\begin{array}{ll}
	X_t &= X_0 + \Int_{(0,t]}\Int_{\mathbb{R}} h_\la(X_{u-}, Y_{u-},z)\wp(du,dz),\\
	Y_t &= Y_0+ t- \Int_{(0,t]} \Int_{\mathbb{R}} g_\la(X_{u-}, Y_{u-},z) \wp(du,dz),
	\end{array}\right\}
	\end{equation}
	where $\wp (dt,dz)$ is the Poisson random measure with
	intensity $dtdz$, independent of $X_0,Y_0$, and
	\begin{eqnarray*}
		h_\la(i,y,z) := \sum_{j \in \mathcal{X}\setminus\{i\}} (j-i) 1_{\Lambda_{ij}(y)}(z), ~~~  g_\la(i,y,z)
		:= \sum_{j \in \mathcal{X}\setminus\{i\}} y 1_{\Lambda_{ij}(y)}(z),
	\end{eqnarray*}
	and for every $y>0$, $\Lambda_{ij}(y):=\left[\sum_{(i',j')\prec(i,j)}\la_{i'j'}(y),~~\la_{ij}(y)+\sum_{(i',j')\prec(i,j)}\la_{i'j'}(y)\right)$, using a strict total order $\prec$ on $\mathcal{X}_2$. In particular $\prec$ can be taken as lexicographic ordering. The existence of unique strong solution of the SIEs \eqref{agedep} follows from (\cite{IW}, Chap. IV, p.231), since $h_\la$ and $g_\la$ are compactly supported in $z$ variable. We refer to \cite{GhS} for a proof that  $\la$ indeed represents the instantaneous transition rate of $X$.
	
	In this paper, we consider a regime switching jump diffusion model of a financial market, where an observed Euclidean space valued pure jump process drives the regimes of every asset. Further, we assume that every component of that pure jump process is an age-dependent semi-Markov process and the components are independent. We study the finite horizon portfolio optimization via the risk sensitive criterion under the above market assumption. The optimization problem is solved by studying the corresponding HJB equation, where we employ the technique of separation of variables to reduce the HJB equation to a system of linear first order PDEs containing some non-local terms. In the reduced equation, the nature of non-locality is such that the standard theory of integro-pde is not applicable to establish the existence and uniqueness of the solution. In this paper, to show well-posedness of this PDE, a Volterra integral equation(IE) of the second kind is obtained and then the existence of a unique $C^1$ solution is shown. Then it is proved that the solution to the IE is a classical solution to the PDE under study. The uniqueness of the PDE is proved by showing that any classical solution also solves the IE. In the uniqueness part, we use conditioning with respect to the transition times of the underlying process. Besides, we also obtain the optimal portfolio selection as a continuous function of time and underlying switching process. The expression of this function does not involve the functional parameter $\la$. Thus the optimal selection is robust. Our approach of solving the PDE also enables us to develop a robust numerical procedure to compute the optimal portfolio wealth using a quadrature method.
	
	The rest of the paper is organized as follows. In the next section, we give a rigorous description of the model of a financial market dynamics and then derive the wealth process of an investor's portfolio. The problem of optimizing the portfolio wealth under the risk sensitive criterion on the finite time horizon is also stated in Section 2. In Section 3 we have established a characterization of the optimal wealth using the corresponding Hamilton-Jacobi-Bellman equation. An optimal portfolio strategy is also shown to exist in the class of Markov feedback control. Furthermore, an optimal feedback control is produced as a minimizer of a certain functional associated with the HJB equation. We illustrate the theoretical results by performing numerical experiments with an example and obtain some relevant results in Section 4. Section 5 contains some concluding remarks. The proofs of certain important lemmata are given in the Appendix.
	
\section{Model Description}
	\label{sec:PrblmForm}
\subsection{Model parameters}
	\noi Let $\mathcal{X}$ denote a finite subset of $\RR$. Without loss of generality, we choose $\mathcal{X} = \{1,2,\ldots,k\}$ and $\mathcal{X}_2 $ as defined above \eqref{agedep}. Consider for each $l=0,1, \ldots, n$, $\la^l: \mathcal{X}_2\times [0,\infty)\to (0,\infty)$ a
	continuously differentiable function in $y$ with $\lambda^l_{ii}(y)=-\sum_{j\neq i}\lambda^l_{ij}(y)$ and
	\begin{equation}\label{eq:lambda}
	\no \underset{y\rightarrow\infty}{\lim}\Lambda_i^l(y)=\infty ,\text{where}\, \Lambda_i^l(y):=\Int_0^y \sum_{j\neq i} \lambda_{ij}^l(v)dv.
	\end{equation}
	Assume that for each $j=1,2,\ldots, m_2$, $\nu_j$ denotes a finite Borel measure on $\RR$. Let $r:[0,T]\times \mathcal{X}^{n+1} \rightarrow [0,\infty)$, $\mu^l:[0,T]\times \mathcal{X}^{n+1}
	\rightarrow \mathbb{R}$, and $\sigma^l: [0,T]\times \mathcal{X}^{n+1} \rightarrow (0,\infty)^{1\times m_1}$
	be continuous functions of the time variable for each $l = 1,\ldots,n$, where $m_1$ and $m_2$ are the positive integers. We also consider a collection of measurable functions
	$\eta_{lj}:\mathbb{R} \rightarrow (-1, \infty)$ for each $l=1,\ldots, n$, $j = 1, \ldots, m_2$.
	
	We further introduce some more notations. Fix $x=(x_0,x_1, \ldots, x_n)\in \mathcal{X}^{n+1}$ and
	$t\in [0,T]$ and we denote $b(t,x) := [\mu^1(t,x)-r(t,x), \ldots, \mu^n(t,x)-r(t,x)]_{1 \times n}$,
	and $\sigma(t, x) := [\sigma_{lj}(t, x)]_{n \times m_1}$, where $\sigma_{lj}$ is the $j$-th component
	of $\sigma^l$ function. For each $z = (z_1, \ldots, z_{m_2}) \in \mathbb{R}^{m_2}$, we denote $\eta(z)
	:= [\eta_{lj}(z_j)]_{n \times m_2}$. We use $[\cdot]^*$ to denote transpose of a vector.
	
	\subsection{Asset price model}
	Let $(\Omega, \mathscr{F}, P)$ be a  complete probability space. Let $\{X^l_0\mid l=0,\ldots, n\}$ be a collection of $(n+1)$ many $\mathcal{X}$ valued
	random variables, and  $\{Y^l_0\mid l=0,\ldots, n\}$ be a collection of $(n+1)$ non negative random variables.
	Let $W = \{W_t\}_{t \ge 0} = \{[W_t^1,\ldots,W_t^{m_1}]^{*}\}_{t \ge 0}$ be a standard $m_1$-dimensional Brownian motion. We further assume that, $\{N_j(dt,dz)|j=1,\ldots, m_2\}$  on $(0,\infty)\times \RR$ and $\{\wp^l(dt,dz_0) \mid l=0,\ldots, n\}$ on
	$(0,\infty)\times \RR$ are two sets of Poisson random measures  with intensities $\nu_j(dz)dt$ and $dtdz_0$ respectively defined on the same probability space. We recall that $\nu_j$ denotes a finite Borel measure for each $j$. It is important to note that the random variables, processes and measures are defined in such a way that they are independent. We denote the
	compensated measures by $\tilde{N}_j(dt,dz_j):= N_j(dt,dz_j) -
	\nu_j(dz_j) dt$ for $j=1,\ldots, m_2$ and $\tilde{\wp}^l(dt,dz_0):=\wp^l(dt,dz_0)-dtdz_0$ for $l=0,\ldots,n$. For each $l=0,1,\ldots,n$, let $\{X^l_t\}_{t\geq 0}$ be the solution to
	\eqref{agedep} with $\wp$ replaced by $\wp^l$, $\lambda$ by
	$\la^l$, $X_0$ by $X_0^l$, and $Y_0$ by $Y_0^l$. In other words
	\begin{eqnarray}
	X_t^l &=& X_0^l + \Int_{(0,t]}\Int_{\mathbb{R}} h^l(X^l_{u-}, Y^l_{u-},z_0)\wp^l(du,dz_0)\text{\label{1}}\\
	Y_t^l &=& Y_0^l+t- \Int_{(0,t]} \Int_{\mathbb{R}} g^l(X^l_{u-}, Y^l_{u-},z_0) \wp^l(du,dz_0)\text{\label{2}},
	\end{eqnarray}
	where $h^l:= h_{\la^l}$ and $g^l:= g_{\la^l}$. We denote the tuple $(X^0_t, X^1_t, \ldots, X^n_t)$
	by $X_t$ and $(Y^0_t, Y^1_t, \ldots, Y^n_t)$ by $Y_t$. Hence, $W$, $\{N_j(dt,dz),j=1,\ldots,m_2\}$
	and $X$ are independent. The process $\{Z^l_t\}_{t\geq 0}:=\{(X^l_t,Y^l_t)\}_{t\geq 0}$ is a time homogeneous strong Markov process.
	
	Let the  filtration $\{\mathscr{F}_t\}_{t\ge 0} $ be the right continuous augmentation of the filtration generated by $W,X,N_j~j=1,\ldots,m_2$ such that $\mathscr{F}_0$ contains all the $P$-null sets.
	We consider a frictionless market consisting of $(n+1)$ assets whose prices are denoted by $S_t^0, S_t^1, \ldots, S_t^{n-1}$ and $S_t^n$ and are traded continuously. We model the hypothetical state of the assets at time $t$ by the pure jump process $X=\{X_t\}_{t\ge 0}$. The state of the asset indicates its mean growth rate and volatility. We assume
	\begin{equation}
	\no dS_t^0 = r(t, X_t) S_t^0 dt, \quad S_0^0 = s_0 \geq 0.
	\end{equation}
	Thus the corresponding asset is (locally) risk free, which refers to the money market account with the floating interest rate  $r(t,x)$ at time $t$
	corresponding to regime $x$. The other $n$ asset prices are assumed to be given by the following stochastic differential equation
	\begin{align} \label{eq:sde}
	dS_t^l & = S_{t-}^l \left[ \mu^l(t, X_t )dt + \ds \sum_{j=1}^{m_1} \sigma_{lj}(t, X_t) ~dW_t^j +
	\ds \sum_{j=1}^{m_2} \Int_{\mathbb{R} } \!\eta_{lj}(z_j)\, N_j(dt, dz_j)\right],   \\
	S_0^l &= s_l, \quad s_l \geq 0, ~ l=1,2,\ldots, n.  \nonumber
	\end{align}
	These prices correspond to $n$ different risky assets.
	Therefore, $\mu^l$ represents the growth rate of the $l$-th asset
	and $\sigma$ the volatility matrix of the market. Here we further assume the following.
	
	\noi \textbf{Assumptions} :
	\begin{itemize}
		\item [\textbf{(A1)}]For each $l=1,\ldots, n$ and $j = 1, \ldots, m_2$, we assume
		$\eta_{lj}\in L^2(\nu_j)$.
		\item [\textbf{(A2)}]For each $l=1,\ldots, n$ and $j = 1, \ldots, m_2$, we further assume
		$\ln(1+\eta_{lj})\in L^2(\nu_j)$.
		\item[\textbf{(A3)}] Let $a(t,x) := \sigma(t,x) \sigma(t,x)^{*}$ denote the diffusion matrix. Assume that there exists a $\delta_1 > 0$ such that for each $t$ and $x$,
		$\xi^* a(t,x)\xi \geq \delta_1 \|\xi\|^2$, where $\|\cdot\|$ denotes the Euclidean norm.
	\end{itemize}
	
	\noi The next lemma asserts the existence and uniqueness of the solution to the  SDE (\ref{eq:sde}). The proof is deferred to the appendix.
	\begin{Lemma}\label{SDESol}
		Under the assumption \emph{\textbf{(A2)}} the equation (\ref{eq:sde}) has a  strong solution, which is adapted, a.s. unique and an rcll process.
	\end{Lemma}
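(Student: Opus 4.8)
The plan is to solve the linear (geometric) SDE (\ref{eq:sde}) explicitly by means of the stochastic exponential, equivalently by a logarithmic change of variable, and to read off the required properties from the closed-form expression. First I would record the structural features of the coefficients: since the regime process $X$ is given and each map $t\mapsto\mu^l(t,x)$, $t\mapsto\sigma_{lj}(t,x)$ is continuous on the compact interval $[0,T]$ while $\mathcal{X}^{n+1}$ is finite, the processes $\mu^l(t,X_t)$ and $\sigma_{lj}(t,X_t)$ are bounded, adapted and rcll. Hence
\[ L_t^l := \int_0^t \mu^l(s,X_s)\,ds + \sum_{j=1}^{m_1}\int_0^t \sigma_{lj}(s,X_s)\,dW_s^j + \sum_{j=1}^{m_2}\int_{(0,t]}\int_{\mathbb{R}}\eta_{lj}(z_j)\,N_j(ds,dz_j) \]
is a well-defined rcll semimartingale and (\ref{eq:sde}) reads $dS_t^l = S_{t-}^l\,dL_t^l$.

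Next I would write down the candidate solution. Because $\eta_{lj}>-1$, every jump multiplies the process by the strictly positive factor $1+\eta_{lj}(z_j)$, so a logarithmic substitution is legitimate, and It\^o's formula for semimartingales with jumps suggests (for $s_l>0$; if $s_l=0$ the unique solution is identically $0$)
\[ S_t^l = s_l\exp\!\Big( \int_0^t\big(\mu^l(s,X_s)-\tfrac12 a_{ll}(s,X_s)\big)\,ds + \sum_{j=1}^{m_1}\int_0^t\sigma_{lj}(s,X_s)\,dW_s^j + \sum_{j=1}^{m_2}\int_{(0,t]}\int_{\mathbb{R}}\ln\!\big(1+\eta_{lj}(z_j)\big)\,N_j(ds,dz_j)\Big), \]
where $a_{ll}=\sum_j\sigma_{lj}^2$. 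I would then verify that the exponent is a genuine rcll semimartingale: the drift and Brownian integrals are controlled by boundedness of the coefficients, while the jump integral is split into its compensated part plus the compensator, i.e.\ the $t$-multiple of the $\nu_j$-integral of $\ln(1+\eta_{lj})$. Here assumption \textbf{(A2)}, namely $\ln(1+\eta_{lj})\in L^2(\nu_j)$, is exactly what is needed: since each $\nu_j$ is a finite measure it also gives $\ln(1+\eta_{lj})\in L^1(\nu_j)$, so the compensator is finite and the compensated integral is a square-integrable martingale. Consequently the exponent is finite almost surely, and $S^l$ is a strictly positive, adapted, rcll process.

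I would then confirm that this candidate indeed solves (\ref{eq:sde}) by applying It\^o's formula to $s_l\,e^{V_t^l}$, with $V^l$ the exponent above: the continuous It\^o correction $\tfrac12 a_{ll}$ cancels the $-\tfrac12 a_{ll}$ term, and at each jump $e^{V_t^l}-e^{V_{t-}^l}=S_{t-}^l\,\eta_{lj}(z_j)$, reproducing the jump term of (\ref{eq:sde}). For uniqueness I would use positivity: any solution $\tilde S^l$ started at $s_l>0$ has jump factors $1+\eta_{lj}(z_j)>0$ and so stays strictly positive, whence $\ln\tilde S^l$ is well defined; It\^o's formula shows $\ln\tilde S^l$ satisfies the same pathwise identity as $V^l$, forcing $\tilde S^l=S^l$. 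Equivalently, one invokes the standard uniqueness of the Dol\'eans--Dade exponential as the unique solution of $dS=S_-\,dL$.

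The main obstacle is the jump part. The delicate point is to justify that the exponent, and in particular the logarithmic jump integral, is a well-defined finite square-integrable semimartingale rather than a merely formal expression; this is precisely where \textbf{(A2)} enters, upgrading the raw integrability of $\eta_{lj}$ to integrability of $\ln(1+\eta_{lj})$ so that the change of variables and the identification of the compensator are valid. Everything else is a routine application of It\^o's formula and the standard theory of stochastic exponentials.
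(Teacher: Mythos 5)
Your proposal is correct and follows essentially the same route as the paper: both solve \eqref{eq:sde} explicitly via a logarithmic change of variables, invoke \textbf{(A2)} together with the finiteness of $\nu_j$ to make the jump integral $\int_0^t\int_{\mathbb{R}}\ln(1+\eta_{lj}(z_j))\,N_j(ds,dz_j)$ well defined, deduce strict positivity of any solution from the fact that jumps multiply by $1+\eta_{lj}(z_j)>0$ (the paper phrases this via a stopping time at the first non-positive value), and obtain uniqueness from the resulting closed-form expression. The only difference is cosmetic ordering --- you verify the candidate first and then prove uniqueness, while the paper derives the formula from an assumed solution first --- so no further comment is needed.
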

	\begin{Remark}
		We note that \emph{\textbf{(A1)}} and \emph{\textbf{(A2)}} follow for the special case where
		\begin{align*}
		-1<\inf_{z\in\mathbb{R}}\eta_{lj}(z)\leq \sup_{z\in \mathbb{R}}\eta_{lj}(z)<\infty.
		\end{align*}
		By \emph{\textbf{(A3)}} the diffusion matrix $a(t,x)$ is uniformly positive
		definite, which ensures that $a(t,x)$ is invertible. We will use this
		condition in Section 3. This condition also implies that $m_1 \ge
		n$.
	\end{Remark}
	
	\subsection{Portfolio value process}
	Consider an investor who is employing a self-financing portfolio of the above $(n+1)$ assets starting
	with a positive wealth. If the portfolio at time $t$ comprises of $\pi_t^l$ number of units of $l$-th
	asset for every $l=0,\ldots, n$, then for each $\omega \in \Omega$ the value of the portfolio at time $t$ is given by
	$$ V_t := \ds\sum_{l=0}^{n} \pi_t^l ~S_t^l. $$
	We allow $\pi_t^l$ be real valued, i.e., borrowing from the money market and short selling of assets are allowed. We further assume that $\{\pi_t^l\}_{t\ge 0}$ is an $\{\mathscr{F}_t\}_{t\ge 0}$ adapted, rcll process for each $l$. Then the self-financing condition implies that
	$$ dV_t ~=~ \ds\sum_{l=0}^{n} \pi_{t-}^l ~dS_t^l.$$
	If $\pi_t^l$ are such that $V_t$ remains positive, we can set $u_t^l := \frac{\pi_t^l S^l(t)}{V_{t}}$, the
	fraction of investment in the $l$-th asset. Then we have ${\sum_{l=0}^{n}} u_t^l = 1$ and hence $u_t^0 = 1 - {\sum_{l=1}^{n}} u_t^l$.
	We call $u_t = [u_t^1, u_t^2, \ldots, u_t^n]^{*}$ as the portfolio strategy of risky assets at time $t$. Then the wealth process, $\{V_t\}_{t\geq 0}$, now onward denoted by $V^u:=\{V^u_t\}_{t\geq 0}$, takes the form
	$$\frac{dV_t^{u}}{V_{t-}^{u}} = \ds\sum_{l=0}^{n} u_{t-}^l \frac{dS_t^l}{S_{t-}^l}. $$
	Thus we would consider the following SDE for the value process,
	\begin{align} \label{eq:pov}
	\no dV_t^{u} & =  V_t^{u} \left(~r(t,X_t) + \ds\sum_{l=1}^{n} \left[\mu^l(t, X_t) - r(t,X_t)\right] u_t^l\right) dt \\
	\no & \quad\quad + V_t^{u} \ds\sum_{l=1}^{n} \ds\sum_{j=1}^{m_1} \sigma_{lj}(t, X_t) ~u_t^l dW_t^j  +  V_{t-}^{u} \ds\sum_{l=1}^{n} \ds\sum_{j=1}^{m_2}u_{t-}^l \Int_{\mathbb{R}} \eta_{lj}(z_j) N_j(dt, dz_j) \\
\no &= V_t^{u} (r(t,X_t) + b(t,X_t)u_t) dt  + V_t^{u} u_t^{*}\sigma(t, X_t)dW_t \\
	 & \quad\quad   + V_{t-}^{u}  \ds\sum_{j=1}^{m_2} \int_{\mathbb{R}} \left[u_{t-}^{*} \eta(z)\right]_j N_j(dt, dz_j),
	\end{align}
	\noi where $u_t^{*} \eta(z) = \left[ \sum_{l=1}^{n}u_t^l\eta_{l1}(z_1), \ldots,
	\sum_{l=1}^{n}u_t^l\eta_{lm_2}(z_{m_2}) \right]_{1 \times m_2}$. Note that, some additional assumptions on $u$ are needed for ensuring a positive strong solution of \eqref{eq:pov}.
	
	\begin{Remark}\label{rem1}
		As before, we need to assume that $u_t$ is such that for each
		$j=1,\ldots, m_2$, and $z\in \RR^{m_2}$, $\left[u_{t-}^{*}
		\eta(z)\right]_j > -1$ to ensure a positive solution to
		\eqref{eq:pov}. For some technical reasons we require a stronger condition on $u_t$. We would require that the the portfolio should be chosen from
		\begin{align}\label{udelta}
		\mathcal{U}_\delta:=\{u\in\mathbb{R}^n|\left[u^{*} \eta(z)\right]_j
		\geq -1+\delta, \forall j, z\} ~\quad \text{for some}~ 0<\delta\leq 1.
		\end{align}
	\end{Remark}
	It is clear from the definition and  the above derivation that $V^{u}$, the portfolio wealth process, is a  controlled  process. Let $\mathbb{A} \subseteq \mathbb{R}^n$ be a convex set containing the origin, denoting the range of portfolio.
	The range is determined based on the investment restrictions. For example, $\mathbb{A} = \mathbb{R}^n$ in the case of unrestricted short selling. The restrictions on short selling makes $\mathbb{A} = \{ u\in \mathbb{R}^n \mid u^l\ge c_l, \sum_{l\ge 1} u^l \le 1-c_0 \forall l\}$, where $c_l\le 0$ for $l=0, \ldots, n$. Clearly, $c_l=0$ for $l=0, \ldots, n$, correspond to no short selling.
	\begin{Definition}\label{defi1}
		An rcll and adapted process $u=\{u_t\}_{t\in [0,T]}$ is said to be \textit{admissible} portfolio strategy if:
		\begin{itemize}
			\item[(i)]  the process $u$ takes values from the convex set $\mathbb{A}_1:= \mathbb{A} \bigcap \mathcal{U}_\delta$, where $ \mathcal{U}_\delta$ is as in  \eqref{udelta},
			\item[(ii)] \eqref{eq:pov} has an almost sure unique strong solution,
			\item[(iii)] $\emph{ess}\displaystyle\sup_\Omega \sup_{[0,T]}\|u_t(\omega)\|<\infty$.
			
		\end{itemize}
		
	\end{Definition}

	\begin{Proposition}\label{solstate}
		Under \textbf{(A1)} and with admissible control $u$, (i) the SDE \eqref{eq:pov} has an almost sure unique positive strong solution,  
		(ii) the solution has finite moments of all positive and negative orders, which are also bounded on $[0,T]$ uniformly in $u$.
	\end{Proposition}
	
	\proof (i) We first note that, since $u_t\in\mathcal{U}_\delta$ and satisfies Definition \ref{defi1}(iii),
	$$|\ln(1+[u_{s-}^*\eta(z)]_j)|<\max\left(|\ln \delta|, C\|\eta_{\cdot j}(z_j)\|\right),$$  where $C:=\emph{ess}\displaystyle\sup_\Omega \sup_{[0,T]}\|u_t(\omega)\|$ and $\eta_{\cdot j}$ is the $j$-th column of the matrix $\eta$.
	Again using \textbf{(A1)} and the finiteness of the measure $\nu_j$, the integration of the above upper bound with respect to $N_j$ has finite expectation. This implies that $\mathbb{E}\Int_{0}^{t  }\! \Int_{\mathbb{R}}\ln(1+ [u_{s-}^*\eta(z)]_j)\,N_j(ds, dz_j)<\infty$. Therefore in the similar line of the proof of Lemma
	\ref{SDESol}, we can show, under the assumption \textbf{(A1)} and the admissibility of $u$, (\ref{eq:pov}) has an a.s. unique positive
	rcll solution, which is an adapted process, and the solution is given by
	\begin{align}\label{SolforV}
	V_{t}^u & = V_0^u\exp\Bigg[\Int_{0}^{t  } \!\Bigg(r(s, X_s)+b(s, X_s)u_s - \frac{1}{2} u_s^*a(s,X_s)u_s\Bigg)ds\left.+\Int_{0}^{t  }\! u_s^*\sigma(s, X_s)\,dW_s\right.\nonumber \\
	& \quad  +\ds\sum_{j=1}^{m_2}\Int_{0}^{t  }\! \Int_{\mathbb{R}}\ln(1+ [u_{s-}^*\eta(z)]_j)\,N_j(ds, dz_j) \Bigg].
	\end{align}
	(ii) We first consider the first order moment. To prove for each $t$, $ V_{t}^u$ has a bounded expectation, we first note that the right hand side can be written as a product of a conditionally log-normal random variable and $\exp\left(\ds\sum_{j=1}^{m_2}\Int_{0}^{t  }\! \Int_{\mathbb{R}}\ln(1+ [u_{s-}^*\eta(z)]_j)\,N_j(ds, dz_j)\right)$, where both are conditionally independent, given the process $u$. We further note that the log-normal random variable has bounded parameters on $[0,T]$ uniformly in $u$. Therefore  it is sufficient to check if
	\begin{align*}
	\mathbb{E}\left[\exp\left(\Int_{0}^{t  }\! \Int_{\mathbb{R}}\ln(1+ C\|\eta_{\cdot j}(z_j)\|)\,N_j(ds, dz_j)\right)\right],
	\end{align*}
	is bounded on $[0,T]$, for all $j=1,\ldots,m_2$.  By applying Lemma \ref{expeta1}, one can show that the above expectation is bounded on $[0,T]$.  Thus $ V_{t}^u $ has bounded expectation on $[0,T]$, uniformly in $u$. Now for the moments of general order, we note that for any $\alpha\in \mathbb{R},~ (V^u_t)^\alpha$ can also be written in a similar form of \eqref{SolforV} where each of the integrals inside the exponential would be multiplied by the constant $\alpha$. Thus the rest of the proof follows in a similar line of that of first order case, given above.\qed
	
	Our goal is to study a risk sensitive optimal control problem on the above wealth process. We would see in the next section that, in order to obtain a classical solution to the corresponding HJB equation, to be defined shortly, certain regularity of the conditional c.d.f of holding time of $X$ is needed. We devote the next subsection to establishing some smoothness of relevant density functions.
	
	\subsection{Regularity properties of holding time distributions}
Let $T^l_n$ be the time of $n$-th
transition of the $l$-th component of $X_t$, whereas $T^l_0=-Y^l_0$ and
$\tau^l_n:=T^l_n-T^l_{n-1}$. We define the function $F^l:[0, \infty)\times \mathcal{X}\rightarrow [0,1]$ as
	$F^l(\bar{y}|i):=1-e^{-\Lambda^l_i(\bar{y})}$ and let
	$f^l(\bar{y}|i):=\frac{d}{d\bar{y}}F^l(\bar{y}|i)$ and for each
	$i\neq j,
	p^l_{ij}(\bar{y}):=\frac{\la^l_{ij}(\bar{y})}{|\la^l_{ii}(\bar{y})|}$
	with $p^l_{ii}(\bar{y})=0$ for all $i$ and $\bar{y}$. Set
	$$\hat{p}^l_{ij}=\Int_0^\infty p^l_{ij}(\bar{y})
	dF^l(\bar{y}|i).$$
	We assume further conditions on the transition rate so that the unconditional transition probability matrix is irreducible.\\
	\noi \textbf{Assumption:} \textbf{(A4)} The matrix $(\hat{p}^l_{ij})$ is irreducible,
	for all $l=0,\ldots,n.$

	\noi From the definition of $F^l$ and the
	assumptions on $\la$, we observe $F^l(\bar{y}|i)<1$, for all
	$\bar{y}>0$. We also note that $\la^l_{ij}(\bar{y})
	=p^l_{ij}(\bar{y})\frac{f^l(\bar{y}|i)}{1-F^l(\bar{y}|i)}$ hold
	for all $i\neq j$.  For a fixed $t$, let
	$n^l(t):=\max\{n:T^l_n\leq t\}$. Hence $T^l_{n^l(t)}\leq t \leq
	T^l_{n^l(t)+1}$ and $Y^l_t=t-T^l_{n^l(t)}$. It is shown in
	\cite{GhS} that $\la^l:\mathcal{X}_2\times[0,\infty)\to (0,\infty)$ is the instantaneous transition rate function of the semi-Markov process $X^l$,  i.e.,
	$$\la^l_{ij}(\bar{y}) = \lim_{\delta\to 0} \frac{1}{\delta}P\left( X^l_{T^l_{n+1}}  = j, \tau^l_{n+1}  \in (\bar{y}, \bar{y} + \delta)|X^l_{T^l_ n}  = i, \tau^l_n > \bar{y}\right).$$ Furthermore,  $F^l(\bar{y}|i)$ is the conditional c.d.f of the
	holding time of $X^l$ and $p^l_{ij}(\bar{y})$ is the conditional
	probability that $X^l$ transits to $j$ from $i$ given the fact that it is
	at $i$ for a duration of $\bar{y}$. Let $\tau^l(t):=$ the remaining life of $l$-th component i.e., the time period from time $t$
	after which the $l$-th component of $X$ would have the subsequent transition.
	Note that $\tau^l(t)$ is independent of every component of $X$
	other than $l$-th one. We denote the conditional c.d.f and p.d.f of $\tau^l(t)$
	given $X^l_t=i$ and $Y^l_t=\bar{y}$ as
	$F_{\tau^l}(\cdot|i,\bar{y})$ and $f_{\tau^l}(\cdot|i,\bar{y})$ respectively. It is important to note that this
	c.d.f does not depend on $t$, mainly because $(X_t,Y_t)$ is time-homogeneous. We also notice that $\tau^l(t)+Y^l_t$ is the duration
	of stagnancy of $X^l_t$ at present state before it moves to another.
	From now we denote $P(\cdot|X_t=x,Y_t=y)$ by $P_{t,x,y}(\cdot)$
	and the corresponding conditional expectation as
	$\mathbb{E}_{t,x,y}(\cdot)$. Let $\ell(t)$ be the component of $X_t$, where the subsequent transition happens.
	Therefore, $P_{t,x,y}(\ell(t)=l)$ represents the conditional probability of observing next
	transition to occur at the $l$-th component given that $X_t=x$ and $Y_t=y$. We find the expressions of the c.d.f and the probability defined above and obtain some properties in the following lemma. The proof is deferred to the appendix. In order to state the lemma, we introduce some more notations. We define an open set
	$$\mathscr{D}:= \{ (t,x,y)\in (0,T)\times\mathcal{X}^{n+1}\times (0,\infty)^n\},$$
	and a linear operator $$D_{t,y}\vf(t,x,y):=\displaystyle\lim_{\vp\rightarrow 0}\frac{1}{\vp}\{\vf(t+\vp,x,y+\vp\mathbf{1})-
	\vf(t,x,y)\},$$
	where dom($D_{t,y}$), the domain of $D_{t,y}$ is the subspace of $C(\mathscr{D})$ such that for each $\vf \in$ dom($D_{t,y}$) above limit exists for every $(t,x,y)\in \mathscr{D}$ and $D_{t,y}\vf \in C(\mathscr{D})$, and $\mathbf{1}\in \RR^{(n+1)}$ is a vector with each component $1$.
	\begin{Lemma}\label{theo1}
		Consider $F^l,f^l,P_{t,x,y}$ as given above.\\
		\noi (i) For each $l$,
		\begin{eqnarray*}
			P_{t,x,y}(\ell(t)=l)=\Int_0^\infty\prod_{m\neq l}\frac{1-F^m(s+y^m|x^m)}
			{1-F^m(y^m|x^m)}\frac{f^l(s+y^l|x^l)}{1-F^l(y^l|x^l)}ds.
		\end{eqnarray*}
		\noi (ii) Let $F_{\tau^l|l}(\cdot|x,y)$ be the conditional c.d.f of $\tau^l(t)$ given $X_t=x, Y_t=y$ and $\ell(t)=l$. Then
		\begin{eqnarray}\label{Ftaullv}
		F_{\tau^l|l}(r|x,y)= \frac{\Int_0^r\prod_{m\neq l}(1-F^m(s+y^m|x^m))f^l(s+y^l|x^l)ds}
		{\Int_0^\infty\prod_{m\neq l}(1-F^m(s+y^m|x^m))f^l(s+y^l|x^l)ds},
		\end{eqnarray}
		and is $C^2$ in $r$ variable.
		
		\noi (iii) \begin{eqnarray}\label{exftaul}
		f_{\tau^l|l}(r|x,y):= \frac{d}{dr}F_{\tau^l|l}(r|x,y)=\frac{\prod_{m\neq l}
			(1-F^m(r+y^m|x^m))f^l(r+y^l|x^l)}{\Int_0^\infty\prod_{m\neq l}(1-F^m(s+y^m|x^m))f^l(s+y^l|x^l)ds},
		\end{eqnarray}
		is differentiable with respect to $r$.
		
		\noi (iv) $F_{\tau^l|l}(T-t|x,y)$ and $P_{t,x,y}(\ell(t)=l)$ are  in \emph {dom}($D_{t,y}$) . Furthermore,
		\begin{align*}
		D_{t,y}P_{t,x,y}(\ell(t)=l) = &\ds\sum_{m=0}^{n} f_{\tau^m}(0|x^m,y^m)P_{t,x,y}(\ell(t)=l)-f_{\tau^l}(0|x^l,y^l)\\
		D_{t,y}F_{\tau^l|l}(T-t|x,y)= &f_{\tau^l|l}(0|x,y) (F_{\tau^l|l}(T-t|x,y)-1).
		\end{align*}
		
		\noi (v) $f_{\tau^l|l}(0|x,y)P_{t,x,y}(\ell(t)=l)=\frac{f^l(y^l|x^l)}{1-F^l(y^l|x^l)}=f_{\tau^l}(0|x^l,y^l)$.
		
	\end{Lemma}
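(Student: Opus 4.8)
The plan is to reduce every assertion to one elementary fact about a single age-dependent component: given $X^l_t=i$ and $Y^l_t=\bar y$, the remaining life $\tau^l(t)$ has conditional survival function $\tfrac{1-F^l(s+\bar y\,|\,i)}{1-F^l(\bar y\,|\,i)}$ and hence conditional density $\tfrac{f^l(s+\bar y\,|\,i)}{1-F^l(\bar y\,|\,i)}$. This is immediate from the time-homogeneity of $(X^l,Y^l)$ and the identification of $F^l$ as the conditional holding-time c.d.f.\ recalled from \cite{GhS}, since conditioning on the age $\bar y$ merely truncates and renormalizes the holding-time law. For part (i) I would then use that the $n+1$ components are independent, so that $\{\ell(t)=l\}=\{\tau^l(t)<\tau^m(t)\ \forall\,m\neq l\}$; conditioning on $\tau^l(t)=s$ and factoring $\prod_{m\neq l}P(\tau^m(t)>s)$ by independence produces exactly the stated integral.

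For (ii) I would write $F_{\tau^l|l}(r|x,y)=P(\tau^l(t)\le r\,|\,\ell(t)=l)=P(\tau^l(t)\le r,\ell(t)=l)/P_{t,x,y}(\ell(t)=l)$; the numerator is the integral of (i) with upper limit $r$, and on forming the ratio the age-normalizing constant $D:=\prod_m(1-F^m(y^m|x^m))$ cancels, leaving \eqref{Ftaullv}. Part (iii) is then the fundamental theorem of calculus applied to that quotient, whose denominator is constant in $r$. The regularity claims rest on the standing hypothesis that each $\la^l$ is $C^1$: then $\Lambda^l_i$ is $C^2$, so $F^l=1-e^{-\Lambda^l_i}$ is $C^2$ and $f^l$ is $C^1$; hence with $H(s,y):=\prod_{m\neq l}(1-F^m(s+y^m|x^m))f^l(s+y^l|x^l)$ and $\Phi(r,y):=\int_0^r H(s,y)\,ds$ the map $r\mapsto\Phi(r,y)$ is $C^2$, $r\mapsto H(r,y)$ is $C^1$, and dividing by the constant $\Phi(\infty,y)$ gives the claimed $C^2$ and $C^1$ smoothness. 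Positivity and finiteness of that denominator I would check from $f^l>0$, $0<1-F^m\le1$, and $\int_0^\infty f^l(s+y^l|x^l)\,ds=1-F^l(y^l|x^l)\le1$. Part (v) is then purely algebraic: since $P_{t,x,y}(\ell(t)=l)=\Phi(\infty,y)/D$ and $f_{\tau^l|l}(0|x,y)=H(0,y)/\Phi(\infty,y)$, their product reduces to $H(0,y)/D=f^l(y^l|x^l)/(1-F^l(y^l|x^l))=f_{\tau^l}(0|x^l,y^l)$.

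The genuinely delicate step, and the main obstacle, is (iv), since $D_{t,y}$ differentiates simultaneously in $t$ and in every $y^m$ along the direction $\mathbf 1$. The key device is the substitution $s\mapsto s-\eps$ inside the shifted integrals: because $F^m$ and $f^l$ enter only through the combinations $s+y^m$, the shift $y\mapsto y+\eps\mathbf 1$ followed by this translation turns $\int_0^\infty H(s,y+\eps\mathbf 1)\,ds$ into $\Phi(\infty,y)-\Phi(\eps,y)$ and, for $F_{\tau^l|l}(T-t-\eps|x,y+\eps\mathbf 1)$, the choice $r=T-t-\eps$ combines with the translation so that the upper limit returns to $T-t$, yielding $\Phi(T-t,y)-\Phi(\eps,y)$. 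After this reduction both difference quotients are elementary. For $P_{t,x,y}(\ell(t)=l)=\Phi(\infty,y)/D(y)$ one differentiates the product using $\tfrac{d}{d\eps}D(y+\eps\mathbf 1)\big|_{0}=-D(y)\sum_m f_{\tau^m}(0|x^m,y^m)$ and $\tfrac1\eps\int_0^\eps H\to H(0,y)$, which assembles into $\sum_m f_{\tau^m}(0|x^m,y^m)P_{t,x,y}(\ell(t)=l)-f_{\tau^l}(0|x^l,y^l)$; for $F_{\tau^l|l}(T-t|x,y)=\Phi(T-t,y)/\Phi(\infty,y)$ a short computation of the quotient $\tfrac{\Phi(T-t,y)-\Phi(\eps,y)}{\Phi(\infty,y)-\Phi(\eps,y)}-\tfrac{\Phi(T-t,y)}{\Phi(\infty,y)}$ divided by $\eps$ gives $\tfrac{H(0,y)}{\Phi(\infty,y)}\big(F_{\tau^l|l}(T-t|x,y)-1\big)=f_{\tau^l|l}(0|x,y)\big(F_{\tau^l|l}(T-t|x,y)-1\big)$. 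The points requiring care are the interchange of the limit with the improper integral, which I would control using the $C^1$ integrand together with the uniform bound $\prod_{m\neq l}(1-F^m)\le1$ and the integrability established for (ii), and the verification that the resulting derivatives are continuous on $\mathscr D$, so that both functions indeed lie in \emph{dom}($D_{t,y}$).
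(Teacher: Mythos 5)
Your proposal is correct and follows essentially the same route as the paper: you derive the single-component remaining-life law $f_{\tau^l}(s|i,\bar y)=f^l(s+\bar y|i)/(1-F^l(\bar y|i))$, obtain (i) by conditioning on $\tau^l(t)$ and using independence, get (ii)--(iii) by cancellation of the normalizer and the fundamental theorem of calculus, and handle (iv) with exactly the paper's change-of-variables device (your $\Phi$ and $H$ are the paper's $\digamma^l_v$ and $\digamma^{l'}_v$), so that the shift $y\mapsto y+\eps\mathbf{1}$ becomes a translation of the integration limits and the quotient rule yields both stated formulas, with (v) as the same direct algebraic cancellation. No gaps.
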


	\subsection{Optimal Control Problem} \label{sec:HJB}

	\noi In this paper we consider a risk sensitive optimization criterion of the terminal portfolio wealth corresponding to a portfolio $u$, that is given by
	\begin{align}\label{eq:riskSens1}
	J_\theta^{u,T}(x,y,v) & := -\left(\frac{2}{\theta}\right) ~\ln\mathbb{E}\left[\exp\left(-\frac{\theta}{2}
	\ln\left(V_T^u\right)\right) ~\biggr|~ X_0=x, Y_0=y, V^u_0=v\right] \nonumber \\
	\nonumber
	&  = -\left(\frac{2}{\theta}\right) ~\ln\mathbb{E}\left[ (V_T^u)^{-\frac{\theta}{2}} ~\biggr|~ X_0=x, Y_0=y, V^u_0=v\right],
	\end{align}
	which is to be maximized over all admissible portfolio strategies with constant risk aversion parameter $\theta > 0$. Since logarithm is increasing, it suffices
	to consider the following cost function
	\begin{equation}\label{eq:riskSens2}
	\no \mathbb{E}\left[(V_T^u)^{-\frac{\theta}{2}} ~\biggr|~ X_0=x, Y_0=y, V^u_0=v\right],
	\end{equation}
	which is to be minimized.
	For all $ (t,x,y,v)\in \mathscr{D}\times (0,\infty)$, let
	\begin{equation}\label{expressvf}
	\left.\begin{array}{ll}
	\tilde{J}_\theta^{u,T}(t,x,y,v) :=  \mathbb{E}\left[(V_T^u)^{-\frac{\theta}{2}} ~\biggr|~ X_t=x, Y_t=y, V^u_t=v\right],\\
	\vf_\theta(t,x,y,v) := \inf_{u}\tilde{J}_\theta^{u,T}(t,x,y,v),
	\end{array}\right\}
	\end{equation}
	\noi where the infimum is taken over all admissible strategies as in Definition \ref{defi1}. Hence, $\vf_\theta$ corresponds to the optimal value.

	Let $u=\{u_t\}_{t\in [0,T]}$ be an admissible strategy such that it has the following form $u_t:=\tilde u(t,X_t,Y_t,V_t)$ for some measurable $\tilde u:\mathscr{D}\times(0,\infty)\to \mathbb{A}_1$. We call such controls as Markov feedback control. Then the augmented process  $\{( X_t, Y_t,V_t^u)\}_{t\in [0,T]}$ is Markov where, $X_t, Y_t,V_t^u$ are as in \eqref{1}, \eqref{2}, (\ref{eq:pov}). We note that for any measurable $\tilde u:\mathscr{D}\times(0,\infty)\to \mathbb{A}_1$, the equation \eqref{eq:pov} may not have a strong solution. However, we will show the existence of a Markov feedback control which is optimal and under which \eqref{eq:pov} has an a.s. unique strong solution.
	
	Let $\mathscr{A}^{\tilde u} $ be the infinitesimal generator  of $\{(t,X_t, Y_t,V_t^u)\}_{t\in [0,T]}$, and $\vf$ be a $C^\infty$ function with compact support, then we have
	\begin{align}\label{infGen}
	& \mathscr{A}^{\tilde u} \vf(t, x, y,v) \nonumber\\
	&= D_{t,y}\vf(t, x, y,v)+v \left[r(t,x) + b(t,x)~\tilde u(t,x,y,v)\right] \frac{\partial }{\partial v} \vf(t, x, y,v)
	\nonumber\\
	& \quad + \frac{1}{2}  v^2 \left[\tilde u^{*}(t,x,y,v)a(t, x)\tilde u(t,x,y,v) \right]\frac{\partial^2 }{\partial v^2}\vf(t, x, y,v)\nonumber \\
	& \quad + \ds\sum_{j=1}^{m_2} \Int_{\mathbb{R}} \!\left[\vf\left(t, x, y,v\left(1 + [\tilde u^{*}(t,x,y,v)  \eta(z)]_j\right)\right)
	- \vf(t, x, y,v) \right]\, \nu_j(dz_j) \nonumber \\
	& \quad +\ds\sum_{l=0}^{n}\ds\sum_{j\neq x^l}\lambda^l_{x^lj}(y^l)\left[\vf(t,R^l_jx,R^l_0y,v) - \vf(t, x, y,v)\right],
	\end{align}
	where the linear operator $R^l_z$ is given by $R^l_zx:=x+(z-x^l)e_l$, $l=0,\ldots,n$, $z\in \RR$ and $\{e_l : l=0,\ldots,n\}$ is the standard basis of $\mathbb{R}^{n+1}$.  For a given $u \in \mathbb{A}_1$, by abuse of notation, we write $\mathscr{A}^u$, when $\tilde u(t,x,y,v)=u$ for all $t,x,y,v$. We consider the following HJB equation
	\begin{align}\label{hjbRS}
	\inf_{u \in \mathbb{A}_1}\mathscr{A}^u\vf(t,x,y,v) = 0,
	\end{align} with the terminal condition
	\begin{align}\label{hjbConsRS}
	\vf(T,x,y,v) = v^{-\frac{\theta}{2}}, ~  x\in \mathcal{X}^{n+1}, ~ y\in[0,T]^{n+1}, \quad v>0.
	\end{align}
	To study the HJB equation we now define following classes of functions
	$$\mathcal{V}:=\{\psi\in C\left((0,\infty)\right)|\displaystyle\sup_{v\in (0,\infty)}{|v^{\frac{\theta}{2}}\psi(v)}|<\infty\}.$$
	
	\begin{Definition}\label{defiG}
		Let $\mathscr{G}\subset\{\vf : \mathscr{D}\times(0,\infty)
		\to \mathbb{R}\}$ be such that for every $\vf \in \mathscr{G}$ the following hold:
		\begin{itemize}
			
			\item[(i)] $\vf(t,x,y,v)$ is twice continuously differentiable with respect to $v\in (0,\infty)$ for all $t\in (0,T),x\in \mathcal{X}^{n+1},y\in (0,t)^{n+1} $ and
			$\vf$ is in dom($D_{t,y}$) for each $v$,$x$,
			\item[(ii)] for fixed $(t,x,y)\in\mathscr{D}$, $\vf(t,x,y,\cdot)\in \mathcal{V}$,
			\item[(iii)]for each $(t,x,y)$,  $v\mapsto v\frac{\partial \vf}{\partial v}$ is in $\mathcal{V}$.
		\end{itemize}
	\end{Definition}
	We now define a classical solution to the problem \eqref{hjbRS}-\eqref{hjbConsRS}.
	\begin{Definition}\label{defcl}
		We say $\varphi:\mathscr{D}\times (0,\infty)\rightarrow \mathbb{R}$ is a classical solution to \eqref{hjbRS}-\eqref{hjbConsRS} if
		$\vf \in \mathscr{G}$ and for all $(t,x,y,v)\in \mathscr{D}\times (0,\infty)$, $\varphi$ satisfies \eqref{hjbRS}-\eqref{hjbConsRS}.
	\end{Definition}
	
\section{Hamilton-Jacobi-Bellman Equation} \label{sec:RskSens}
	\noi We look for a solution to \eqref{hjbRS}-\eqref{hjbConsRS} of the form
	\begin{equation}\label{eq:RSTransf}
	\vf(t,x,y,v) = v^{-\frac{\theta}{2}}\psi(t,x,y),
	\end{equation}
	where $\psi \in \text{dom}(D_{t,y})$. Clearly, the left hand side of \eqref{eq:RSTransf} is in class $\mathscr{G}$.  We will establish the following result in first two subsections.
	
	\begin{Theorem}\label{theo2}
		The Cauchy problem \eqref{hjbRS}-\eqref{hjbConsRS} has a unique classical solution, $\vf_M$, of the form \eqref{eq:RSTransf}.
	\end{Theorem}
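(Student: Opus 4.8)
The plan is to use the multiplicative ansatz \eqref{eq:RSTransf} to collapse the fully nonlinear equation \eqref{hjbRS} into a linear non-local first-order problem for $\psi$, and then to convert that problem into a Volterra integral equation read off along the characteristics of $D_{t,y}$. Substituting $\vf=v^{-\theta/2}\psi$ into \eqref{infGen} and noting that $D_{t,y}$ acts only on $(t,y)$ while every $v$-derivative of $v^{-\theta/2}$ and the jump evaluation $\vf(\,\cdot\,,v(1+[u^{*}\eta(z)]_j))$ merely reproduce the factor $v^{-\theta/2}$, the generator factorizes as $\mathscr{A}^u\vf=v^{-\theta/2}\big[D_{t,y}\psi-\frac{\theta}{2}r(t,x)\psi+H(t,x,u)\psi+\sum_{l,j}\la^l_{x^lj}(y^l)(\psi(t,R^l_jx,R^l_0y)-\psi)\big]$, where $H(t,x,u):=-\frac{\theta}{2}b(t,x)u+\frac{\theta}{4}(\frac{\theta}{2}+1)u^{*}a(t,x)u+\sum_{j=1}^{m_2}\Int_{\RR}[(1+[u^{*}\eta(z)]_j)^{-\theta/2}-1]\nu_j(dz_j)$ is the only $u$-dependent piece. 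By \textbf{(A3)} and $\frac{\theta}{4}(\frac{\theta}{2}+1)>0$ the quadratic form is uniformly elliptic, and since $c\mapsto(1+c)^{-\theta/2}$ is convex and nonnegative, $H(t,x,\cdot)$ is strictly convex and coercive on the closed convex set $\mathbb{A}_1$; hence $\hat H(t,x):=\inf_{u\in\mathbb{A}_1}H(t,x,u)$ is finite and continuous and is attained at a unique, locally bounded minimizer $u^{*}(t,x)$, which I will later read as the optimal feedback law. Granting for the moment $\psi>0$, one factors $\psi$ out of the infimum and \eqref{hjbRS} reduces to the linear non-local equation $D_{t,y}\psi=(\frac{\theta}{2}r-\hat H+\Lambda)\psi-\sum_{l,j}\la^l_{x^lj}(y^l)\psi(t,R^l_jx,R^l_0y)$ with $\Lambda(x,y)=\sum_l|\la^l_{x^lx^l}(y^l)|$ and terminal data $\psi(T,\cdot)=1$.

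Next I would integrate this reduced equation along the characteristic $s\mapsto(t+s,x,y+s\mathbf{1})$ of $D_{t,y}$. With $g(s):=\psi(t+s,x,y+s\mathbf{1})$ it becomes a scalar linear ODE whose integrating factor, after using $\sum_{j\neq i}\la^l_{ij}=f^l/(1-F^l)$, is exactly the discounted survival weight $\mu(s)=\exp(-\Int_0^s(\frac{\theta}{2}r-\hat H)\,d\xi)\prod_m\frac{1-F^m(y^m+s\mid x^m)}{1-F^m(y^m\mid x^m)}$; integrating from $s=0$ to $s=T-t$ and using $g(T-t)=1$ yields the Volterra equation of the second kind $\psi(t,x,y)=\mu(T-t)+\Int_0^{T-t}\mu(s)\sum_{l,j}\la^l_{x^lj}(y^l+s)\,\psi(t+s,R^l_jx,R^l_0(y+s\mathbf{1}))\,ds$. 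Reading this as $(I-K)\psi=\mu(T-\cdot)$ on $C(\overline{\mathscr D})$, each application of $K$ advances the lower time endpoint so that the $n$-fold kernel is supported on a simplex of volume $(T-t)^n/n!$; since $\la,r,\hat H,\mu$ are all bounded on the compact time-regime domain, the Neumann series $\sum_n K^n$ converges, giving a unique continuous $\psi$, which is strictly positive because the free term $\mu$ is positive and $K$ preserves positivity. The regularity of $F^m,f^m$ from Lemma \ref{theo1}, the $C^1$ dependence of $\la,r$, and the continuity of $\hat H$ let me differentiate the right-hand side along $D_{t,y}$, so $\psi\in\mathrm{dom}(D_{t,y})$ with $D_{t,y}\psi\in C(\mathscr D)$ and $\psi$ solves the reduced equation. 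Tracing back the factorization then gives $\mathscr{A}^u\vf=v^{-\theta/2}[H(t,x,u)-\hat H(t,x)]\psi\ge 0$, with equality at $u^{*}$, so $\vf_M=v^{-\theta/2}\psi\in\mathscr G$ is a classical solution and $u^{*}$ is an optimal Markov feedback control.

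For uniqueness I would reverse the reduction. Given any classical solution of the form \eqref{eq:RSTransf}, coercivity of $H(t,x,\cdot)$ makes the infimum in \eqref{hjbRS} finite only where $\psi\ge 0$, and the terminal value $\psi(T,\cdot)=1$ together with continuity propagates $\psi>0$; the same factorization then shows $\psi$ solves the reduced linear equation. Integrating that equation along the characteristics of $D_{t,y}$—made rigorous by conditioning the Markov process $\{(X_t,Y_t)\}$ on its successive transition times, which is exactly where the non-locality is kept under control—forces $\psi$ to satisfy the very Volterra equation above, and its uniqueness then yields $\psi=v^{\theta/2}\vf_M$. The main obstacle, and the reason the standard integro-PDE theory does not apply, is the non-local term $\sum_{l,j}\la^l_{x^lj}(y^l)\psi(t,R^l_jx,R^l_0y)$: it couples distinct regimes and resets the age coordinate, so it is not a zeroth-order perturbation of a local elliptic operator. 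Consequently the crux of the whole argument is the well-posedness and the $\mathrm{dom}(D_{t,y})$-regularity of the Volterra equation together with the exact, two-directional equivalence between classical solutions of the PDE and solutions of the integral equation.
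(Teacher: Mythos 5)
Your proposal is correct in substance and follows the same overall architecture as the paper --- the ansatz \eqref{eq:RSTransf} collapsing \eqref{hjbRS}--\eqref{hjbConsRS} into the linear non-local problem \eqref{eq:hjbPdeRS}--\eqref{eq:hjbPdeTerRS}, a Volterra equation of the second kind, and a two-directional equivalence between the PDE and the integral equation --- but it differs in two technical choices, both legitimate. First, you derive the integral equation deterministically, integrating along the characteristics of $D_{t,y}$ with the survival-function integrating factor $\prod_m\frac{1-F^m(y^m+s\mid x^m)}{1-F^m(y^m\mid x^m)}=\exp\big(-\int_0^s\sum_l\sum_{j\neq x^l}\la^l_{x^lj}(y^l+\xi)\,d\xi\big)$. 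Your equation is the paper's \eqref{intrisk} in disguise: by the identities underlying Lemma \ref{theo1}, $P_{t,x,y}(\ell(t)=l)\,f_{\tau^l|l}(r\mid x,y)\,p^l_{x^lj}(y^l+r)=\la^l_{x^lj}(y^l+r)\prod_m\frac{1-F^m(r+y^m\mid x^m)}{1-F^m(y^m\mid x^m)}$ and $\sum_l P_{t,x,y}(\ell(t)=l)\,(1-F_{\tau^l|l}(T-t\mid x,y))$ equals the joint survival ratio, so your free term and kernel coincide with those of \eqref{intrisk}. The paper instead obtains \eqref{intrisk} probabilistically, and in the uniqueness direction (Proposition \ref{ClassicalRS}) it runs a martingale/Feynman--Kac argument followed by conditioning on $\ell(t)$ and $\tau^l(t)$; your ODE-along-characteristics derivation is a more elementary replacement and is rigorous as it stands, since for a classical solution the map $s\mapsto\psi(t+s,x,y+s\mathbf{1})$ is differentiable with derivative $D_{t,y}\psi$ by the very definition of dom$(D_{t,y})$, so the probabilistic conditioning you invoke ``for rigor'' is not actually needed there. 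Second, you solve the Volterra equation by Neumann series with the simplex bound $(M(T-t))^n/n!$, whereas the paper (Proposition \ref{IESolnRS}) uses a one-step Banach contraction whose constant $K_1<1$ comes from $F_{\tau^l|l}(T-t\mid x,y)<1$; both work, and your iteration has the bonus of exhibiting strict positivity of $\psi$ term by term, a point the paper leaves implicit.

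Two spots in your write-up need more care. (a) In the uniqueness direction, the claim that coercivity forces $\psi\ge 0$ is valid only when $\mathbb{A}_1$ is unbounded; if $\mathbb{A}_1$ is compact (e.g.\ under short-selling restrictions) then $\inf_{u}g_\theta(t,x,u)\,\psi=\psi\,\sup_u g_\theta(t,x,u)$ is finite where $\psi<0$, so you genuinely need the propagation-of-positivity step, and ``terminal value $1$ plus continuity'' is only its germ: continuity gives $\psi>0$ near $t=T$, and one must bootstrap --- on the maximal region where $\psi>0$ the integral representation bounds $\psi$ below by the strictly positive free term, so that region is closed as well as relatively open in time, hence everything. (To be fair, the paper also treats Theorem \ref{theo2} as an immediate corollary of Theorem \ref{theo3} and glosses this same sign issue.) (b) The dom$(D_{t,y})$-regularity of the fixed point is not just ``differentiate the right-hand side'': it requires the differentiated-kernel estimates of Proposition \ref{IESolnRS}(ii) (mean value theorem plus bounded convergence, resting on Lemma \ref{theo1}(iii)--(iv) and Proposition \ref{hContinuityRS}(ii)); since these are exactly the ingredients you cite, this is a sketch-level omission rather than an error. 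With these two repairs your argument is a complete proof of Theorem \ref{theo2}.
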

	Substitution of (\ref{eq:RSTransf}) into \eqref{hjbRS}, yields
	\begin{align}\label{eq:hjbPdeRS}
	D_{t,y}\psi(t,x,y)+\ds\sum_l\ds\sum_{j\neq x^l}\la^l_{x^lj}(y^l)\Big[\psi(t,R^l_jx,R^l_0y)-\psi(t,x,y)\Big]
	+h_\theta(t,x)\psi(t,x,y)=0,
	\end{align}
	for each $(t,x,y)\in \mathscr{D}$ with the condition
	\begin{equation}\label{eq:hjbPdeTerRS}
	\psi(T,x, y) = 1,
	\end{equation}
	where the map $h_{\theta}:[0,T]\times \mathcal{X}^{n+1}\rightarrow \mathbb{R}$ is given by
	\begin{equation}\label{htheta}
	h_\theta(t,x) := \inf_{u \in \mathbb{A}_1}\left[g_\theta(t,x,u)\right],
	\end{equation}
	the infimum of a family of continuous functions
	\begin{align}
	g_\theta(t,x,u) & := \left(-\frac{\theta}{2}\right) \left[r(t,x) + b(t,x)~u\right]  + \frac{1}{2}  \left(-\frac{\theta}{2}\right)\left(-\frac{\theta}{2}-1\right)\left[u^{*}~a(t,x)~u \right]  \nonumber \\
	&  \quad + \ds\sum_{j=1}^{m_2} \Int_{\mathbb{R}} \!\left(\left(1 + [u^{*}  \eta(z)]_j\right)^{\left(-\frac{\theta}{2}\right)}- 1 \right)\, \nu_j(dz_j).\nonumber
	\end{align}
	It is important to note that the linear first order equation \eqref{eq:hjbPdeRS} is nonlocal due to the presence of the term $\psi(t,R^l_jx,R^l_0y)$ in the equation. It implies that $D_{t,y} \psi(t,x,y)$ depends on the value of $\psi$ at the point $(t,\cdot,R^l_0y)$, which does not lie in the neighbourhood of $(t,\cdot, y)$. We now define a classical solution to \eqref{eq:hjbPdeRS}-\eqref{eq:hjbPdeTerRS} below.
	\begin{Definition}\label{deffcl}
		We say $\varphi:\mathscr{D}\rightarrow \mathbb{R}$ is a classical solution to \eqref{eq:hjbPdeRS}-\eqref{eq:hjbPdeTerRS} if $\varphi\in$ dom$(D_{t,y})$ and for all $(t,x,y)\in \mathscr{D}$, $\varphi$ satisfies \eqref{eq:hjbPdeRS}-\eqref{eq:hjbPdeTerRS}.
	\end{Definition}
	\begin{Remark}
It is interesting to note that other than the terminal condition \eqref{eq:hjbPdeTerRS}, no additional boundary conditions are imposed.  The remaining part of the boundary is $\bar{\mathscr{D}}\cap \{(t,x,y)|y^l =0, x \in \mathcal{X}^{n+1}, t\in [0,T]\}$. We note from \eqref{2} that, $0\leq Y^l_t$, for all $t\in [0,T]$. Hence $\{Y_t\}_{t\geq 0}$ does not cross the boundary. Thus the value of solution on the boundary is obtained from the terminal condition \eqref{eq:hjbPdeTerRS}.
	\end{Remark}
	\begin{Theorem}\label{theo3}
		The Cauchy problem \eqref{eq:hjbPdeRS}-\eqref{eq:hjbPdeTerRS} has a unique classical solution in $C_b(\bar{\mathscr{D}})$.
	\end{Theorem}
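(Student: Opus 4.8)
The plan is to reduce the nonlocal transport problem \eqref{eq:hjbPdeRS}--\eqref{eq:hjbPdeTerRS} to an equivalent Volterra integral equation of the second kind and solve that instead, since, as noted after \eqref{eq:hjbPdeRS}, the age-reset nonlocality puts the equation outside standard integro-PDE theory. The operator $D_{t,y}$ differentiates along $s\mapsto(t+s,x,y+s\mathbf 1)$, i.e.\ along the characteristic on which the regime $x$ is frozen while every age coordinate and time increase at unit rate; along this curve $D_{t,y}\psi$ is an ordinary derivative and \eqref{eq:hjbPdeRS} is a transport equation with a source supported on the post-jump configurations $(R^l_jx,R^l_0y)$. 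Integrating along the characteristic up to the first regime transition and using the strong Markov property of $(X,Y)$ together with Lemma~\ref{theo1}, I would obtain
\[\psi(t,x,y)=\Phi(t,x,y)+\sum_{l}\Int_0^{T-t}\sum_{j\neq x^l}\mathcal K^l_{j}(r;t,x,y)\,\psi\big(t+r,R^l_jx,R^l_0(y+r\mathbf 1)\big)\,dr,\]
where the forcing term is the survival-weighted terminal value
\[\Phi(t,x,y)=\Big(\prod_m\tfrac{1-F^m(T-t+y^m|x^m)}{1-F^m(y^m|x^m)}\Big)\exp\Big(\Int_t^T h_\theta(s,x)\,ds\Big),\]
and $\mathcal K^l_j$ is the product of the joint ``$\ell(t)=l$ and remaining life $r$'' density $P_{t,x,y}(\ell(t)=l)f_{\tau^l|l}(r|x,y)$ from Lemma~\ref{theo1}, the transition probability $p^l_{x^lj}(r+y^l)$, and the discount $\exp(\int_t^{t+r}h_\theta(s,x)\,ds)$. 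After the substitution $s=T-t$ this is a \emph{forward} Volterra equation, in which $\psi$ at time-to-maturity $s$ depends only on its values at smaller time-to-maturity.

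I would then solve this equation on $C(\bar{\mathscr D})$ by Picard iteration. Under \textbf{(A3)} the quadratic form in $g_\theta$ is positive definite and coercive in $u$, so the infimum in \eqref{htheta} is attained and $h_\theta$ is a bounded continuous function of $(t,x)$; combined with \textbf{(A4)}, the boundedness of the densities $f^l$, and the survival factors, the kernel $\mathcal K^l_j$ is bounded on the compact time-to-maturity interval. The Volterra structure then forces the iterated operator $K^n$ to have norm of order $(Cs)^n/n!$, so $\sum_nK^n$ converges, $I-K$ is boundedly invertible, and a unique continuous solution $\psi$ exists. For the $C^1$ regularity I would differentiate the integral equation along the $D_{t,y}$-direction: the $C^2$-in-$r$ regularity of $F_{\tau^l|l}$ and the differentiability of $f_{\tau^l|l}$ from Lemma~\ref{theo1}(ii)--(iii) justify differentiating under the integral and applying Leibniz's rule to the variable endpoint $r=T-t$, giving that $D_{t,y}\psi$ exists and is continuous, i.e.\ $\psi\in\mathrm{dom}(D_{t,y})$.

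To verify that this $\psi$ is a classical solution of \eqref{eq:hjbPdeRS}, I would apply $D_{t,y}$ to the integral equation and simplify with the identities of Lemma~\ref{theo1}(iv)--(v). Differentiating $\Phi$ and the variable upper limit produces the boundary contributions at $r=0$; here Lemma~\ref{theo1}(v), which identifies $f_{\tau^l|l}(0|x,y)P_{t,x,y}(\ell(t)=l)$ with $f^l(y^l|x^l)/(1-F^l(y^l|x^l))$ and hence, through $\la^l_{x^lj}(y^l)=p^l_{x^lj}(y^l)f^l(y^l|x^l)/(1-F^l(y^l|x^l))$, with the transition rates, is exactly what turns the $r=0$ endpoint into the nonlocal sum $\sum_l\sum_{j\neq x^l}\la^l_{x^lj}(y^l)\psi(t,R^l_jx,R^l_0y)$, while the $D_{t,y}$-derivatives of the discount and survival factors (using Lemma~\ref{theo1}(iv)) reproduce the $-h_\theta\psi$ and $+\sum\la^l_{x^lj}\psi(t,x,y)$ terms. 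Matching all contributions recovers \eqref{eq:hjbPdeRS}, and $\Phi(T,x,y)=1$ gives the terminal condition \eqref{eq:hjbPdeTerRS}.

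Finally, for uniqueness I would take an arbitrary classical solution $\psi$ and show it satisfies the same integral equation, forcing it to coincide with the unique solution found above. Since $\psi$ solves \eqref{eq:hjbPdeRS}, the process $s\mapsto\psi(s,X_s,Y_s)\exp\big(\int_t^s h_\theta(w,X_w)\,dw\big)$ has zero drift and is a martingale, so $\psi(t,x,y)=\mathbb E_{t,x,y}\big[\exp(\int_t^T h_\theta(s,X_s)\,ds)\big]$; conditioning this expectation on the first transition time and on $\ell(t)$ (and iterating over the successive transition times $T^l_n$) reproduces the Volterra equation, giving uniqueness. The step I expect to be the main obstacle is the regularity-plus-consistency argument of the third paragraph: because the $R^l_0$ map sends $y^l$ to the boundary face $y^l=0$ of $\mathscr D$, one must establish continuity of $\psi$ and of $D_{t,y}\psi$ up to that face and control the $r=0$ endpoint with care, and it is precisely the delicate derivative identities of Lemma~\ref{theo1}(iv)--(v) that make the nonlocal boundary term collapse exactly onto the regime-jump term of the PDE.
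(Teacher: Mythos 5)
Your proposal is correct and follows essentially the same route as the paper: the paper likewise reduces \eqref{eq:hjbPdeRS}--\eqref{eq:hjbPdeTerRS} to the Volterra integral equation \eqref{intrisk} (your survival-product forcing term $\Phi$ equals the paper's $\sum_l P_{t,x,y}(\ell(t)=l)(1-F_{\tau^l|l}(T-t|x,y))e^{H_\theta(t,T,x)}$, and your kernel $\mathcal{K}^l_j$ is identical), proves existence, uniqueness and membership in dom($D_{t,y}$) in Proposition \ref{IESolnRS}, verifies via the identities of Lemma \ref{theo1}(iv)--(v) that the solution satisfies the PDE in Proposition \ref{IVPUniqRS}, and obtains uniqueness in Proposition \ref{ClassicalRS} by exactly your martingale/Feynman--Kac representation $\psi(t,x,y)=\mathbb{E}_{t,x,y}[e^{\int_t^T h_\theta(s,X_s)ds}]$ followed by conditioning on $\ell(t)$ and $\tau^l(t)$. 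The only cosmetic deviations are your use of a Neumann-series/Picard bound of order $(Cs)^n/n!$ where the paper applies the Banach fixed point theorem with contraction constant $K_1<1$, and a slight gloss over the local-to-true martingale upgrade that the paper handles via the boundedness of $\psi$ and $h_\theta$.
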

	
	\begin{Remark}
		Note that Theorem \ref{theo2} may be treated as a corollary of Theorem \ref{theo3} in view of the substitution (\ref{eq:RSTransf}) and subsequent analysis. Thus it suffices to establish Theorem \ref{theo3}. We establish Theorem \ref{theo3} in the subsection \ref{lfoe} via a study of an integral equation which is presented in subsection \ref{tvie}. The following result would be useful to establish well-posedness of \eqref{eq:hjbPdeRS}-\eqref{eq:hjbPdeTerRS}.
	\end{Remark}
	
	\begin{Proposition}\label{hContinuityRS}
		Consider the map $h_{\theta}:[0,T]\times \mathcal{X}^{n+1}\rightarrow \mathbb{R}$, given by, \eqref{htheta}. Then under \emph{\textbf{(A3)}}, we have
		\begin{itemize}
			\item[(i)] $h_\theta$ is continuous, negative valued and bounded below;
			\item[(ii)] $H_\theta(t_1,t_2,x) :=\int_{t_1}^{t_2}h_\theta(s,x)ds$ is $C^1$ in both $t_1$ and $t_2$ for each $x$;
			\item[(iii)] For every $(t,x)$, there exists a unique $u^{\star}(t,x) \in \mathbb{A}_1$ such that $h_{\theta}(t,x) = g_{\theta}(t,x,u^{\star}(t,x))$  and $u^{\star}:[0,T]\times \mathcal{X}^{n+1}\to \mathbb{A}_1$ is continuous in $t$;
			\item[(iv)]$\bar{u}^\star:=\{u^{\star}(t,X_t)\}_{t\geq 0}$ is admissible.
		\end{itemize}
	\end{Proposition}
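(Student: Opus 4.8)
The plan is to recognize that all four parts reduce to a convex-analytic study of the map $u \mapsto g_\theta(t,x,u)$ on the fixed closed convex set $\mathbb{A}_1 = \mathbb{A}\cap\mathcal{U}_\delta$, exploiting uniform positive-definiteness of $a$ from \textbf{(A3)}. First I would establish that $g_\theta(t,x,\cdot)$ is \emph{strictly convex and coercive, uniformly in} $(t,x)$. Strict convexity comes from the quadratic term: by \textbf{(A3)}, $u^*a(t,x)u\ge\delta_1\|u\|^2$ and the prefactor $\tfrac12(-\tfrac\theta2)(-\tfrac\theta2-1)=\tfrac{\theta(\theta+2)}{8}>0$ since $\theta>0$, so this term is a positive-definite quadratic form; the linear term contributes nothing to convexity, and the integral term is convex because $s\mapsto(1+s)^{-\theta/2}$ has second derivative $\tfrac\theta2(\tfrac\theta2+1)(1+s)^{-\theta/2-2}>0$ on $\{1+s>0\}$ and $u\mapsto[u^*\eta(z)]_j$ is affine. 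On $\mathcal{U}_\delta$ one has $1+[u^*\eta(z)]_j\ge\delta$, hence the integrand satisfies $-1\le(1+[u^*\eta(z)]_j)^{-\theta/2}-1\le\delta^{-\theta/2}-1$, so by finiteness of each $\nu_j$ the integral term is \emph{bounded}, in particular bounded below by $-\sum_j\nu_j(\mathbb{R})$. Combining these,
\[
g_\theta(t,x,u)\;\ge\;-\tfrac\theta2 r(t,x)-\tfrac\theta2\|b(t,x)\|\,\|u\|+\tfrac{\theta(\theta+2)}{8}\delta_1\|u\|^2-\textstyle\sum_j\nu_j(\mathbb{R}),
\]
and since $r,b,a$ are continuous on the compact set $[0,T]\times\mathcal{X}^{n+1}$ they are uniformly bounded, so the quadratic growth dominates uniformly in $(t,x)$. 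This yields three things at once: $g_\theta(t,x,\cdot)$ attains a \emph{unique} minimizer $u^\star(t,x)$ on the nonempty closed convex set $\mathbb{A}_1$ (proving existence/uniqueness in (iii)); $h_\theta$ is bounded below (part (i)); and all minimizers lie in a fixed ball $\bar B(0,R)$ with $R$ independent of $(t,x)$, a fact I will need for (iv).

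For the sign in (i), note $0\in\mathbb{A}_1$ (indeed $[0]_j=0\ge-1+\delta$ as $\delta\le1$, and $0\in\mathbb{A}$ in both the unrestricted and the short-selling-restricted cases since $c_l\le0$). Evaluating gives $h_\theta(t,x)\le g_\theta(t,x,0)=-\tfrac\theta2 r(t,x)\le0$ because $r\ge0$, establishing non-positivity; strict negativity then follows from the first-order optimality analysis, since $h_\theta(t,x)=0$ would force $u^\star(t,x)=0$ with a vanishing feasible-direction derivative of $g_\theta$ at $0$, a degenerate condition on $b$ and $\int\eta_{\cdot j}\,d\nu_j$. For continuity of $h_\theta$ and of $u^\star$ in $t$, I would first check that $g_\theta$ is \emph{jointly} continuous in $(t,x,u)$: continuity in $t$ is inherited from $r,\mu^l,\sigma^l$, $x$ ranges over a discrete set, and continuity in $u$ of the integral term follows by dominated convergence using the uniform bound $\delta^{-\theta/2}-1$ on $\mathcal{U}_\delta$. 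Since the constraint set may be restricted to the fixed compact set $\mathbb{A}_1\cap\bar B(0,R)$ without changing the infimum, an application of Berge's maximum theorem gives that $h_\theta(\cdot,x)$ is continuous and the argmin correspondence is upper hemicontinuous; uniqueness of $u^\star(t,x)$ then upgrades this to continuity of $t\mapsto u^\star(t,x)$, completing (i) and the continuity claim in (iii).

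Part (ii) is then immediate: $h_\theta(\cdot,x)$ being continuous, the fundamental theorem of calculus gives $\tfrac{\partial}{\partial t_2}H_\theta=h_\theta(t_2,x)$ and $\tfrac{\partial}{\partial t_1}H_\theta=-h_\theta(t_1,x)$, both continuous, so $H_\theta$ is $C^1$ in each variable. For (iv) I would verify the three requirements of Definition \ref{defi1} for $\bar u^\star_t=u^\star(t,X_t)$. The crucial structural observation is that $u^\star(t,X_t)$ depends on $(t,X_t)$ \emph{only} and not on the wealth $V_t$, so there is no genuine feedback fixed-point to resolve. Measurability and the rcll property hold because $u^\star$ is continuous in $t$ (hence Carathéodory), $X$ is rcll and $\sigma(X_t)\subset\mathscr{F}_t$, so $t\mapsto u^\star(t,X_t(\omega))$ is an adapted rcll process taking values in $\mathbb{A}_1$; the essential-supremum bound (iii) of Definition \ref{defi1} follows from $\|u^\star(t,x)\|\le R$ for all $(t,x)$. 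With $\bar u^\star$ an adapted, rcll, bounded, $\mathbb{A}_1$-valued process independent of $V$, equation \eqref{eq:pov} is exactly of the type solved in Lemma \ref{solstate}: its solution is given explicitly by the stochastic exponential \eqref{SolforV}, which is a.s. unique and positive, verifying condition (ii) and hence admissibility. The main obstacle I anticipate is the continuity of $u^\star$ in (iii): because $\mathbb{A}_1$ need not be compact, this rests entirely on the \emph{uniform} coercivity from \textbf{(A3)} to confine the minimizers to a common compact set before invoking the maximum theorem; the pleasant payoff is that this same uniform bound is precisely what makes $\bar u^\star$ essentially bounded in (iv).
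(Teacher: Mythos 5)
Your route is in substance the same as the paper's: use \textbf{(A3)} to produce a coercive lower bound for $g_\theta$ that is independent of $(t,x)$ (your displayed inequality is exactly the paper's $\bar{g}_\theta(u)$), use negativity of $h_\theta$ to replace $\mathbb{A}_1$ by a fixed nonempty compact convex subset on which the infimum is attained (your $\mathbb{A}_1\cap\bar{B}(0,R)$ versus the paper's $\mathbb{A}_2=\mathbb{A}_1\cap\bar{g}_\theta^{-1}\left((-\infty,1]\right)$), invoke the Berge maximum theorem (\cite{Rang}, Th.\ 9.14) for continuity of $h_\theta$ and upper semicontinuity of the argmin correspondence, and upgrade to continuity of $u^\star$ via strict convexity; you obtain strict convexity from the quadratic term alone plus convexity of the jump integral, while the paper computes the full Hessian including the $\nu_j$-integral term --- same conclusion. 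Parts (ii) and (iv) also match: the fundamental theorem of calculus for (ii), and for (iv) boundedness of $u^\star$ together with its independence of $v$; the paper verifies Definition \ref{defi1}(ii) by citing the Lipschitz/growth conditions of Theorem 1.19 of \cite{OkSu}, where you instead point to Lemma \ref{solstate} and the explicit exponential \eqref{SolforV}, an equally valid justification.

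The one genuine gap is strict negativity in (i). Evaluating at $u=0$ gives only $h_\theta(t,x)\le g_\theta(t,x,0)=-\frac{\theta}{2}r(t,x)\le 0$ under the stated hypothesis $r\ge 0$, and your claim that strict negativity ``follows from the first-order optimality analysis'' is not a proof; as a general implication it is false. If $r(t,x)=0$, $b(t,x)=0$ and $\eta\equiv 0$, then $g_\theta(t,x,u)=\frac{\theta(\theta+2)}{8}\,u^{*}a(t,x)u\ge 0$ with minimum $0$ attained at $u^\star=0$, so $h_\theta(t,x)=0$: no degeneracy or feasible-direction argument at $u=0$ can rule this out without an extra hypothesis. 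The paper's proof simply writes $g_\theta(t,x,0)=-\frac{\theta}{2}r(t,x)<0$, i.e.\ it tacitly treats the interest rate as strictly positive, and you should either make that assumption explicit as well, or note that everywhere negativity is actually used downstream --- the bound $e^{H_\theta(t,t+r,x)}\le 1$ in Proposition \ref{IESolnRS} and the reduction of the infimum to the compact set --- non-positivity $h_\theta\le 0<1$ already suffices. Apart from this sub-step, your argument is correct and complete.
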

	\begin{proof}
		(i) We recall that, ${\mathbb A}_1$, the range of portfolio includes the origin. Therefore
		\begin{eqnarray*}
			h_\theta (t,x) & \leq & g_\theta(t,x,0)= -\frac{\theta}{2}r(t,x)<0.
		\end{eqnarray*}
		Thus $h_\theta$ is negative valued. By the continuity assumptions on $r,b$ and $a$, for fixed $u$ and each $x \in \mathcal{X}^{n+1}$, $r(t,x)$, $b(t,x),$ and $a(t,x)$ are bounded on $[0,T]$. Let $M \geq  0$ be such that $$ \max_{t \in [0,T]}\{|r(t,x)|, \|b(t,x)\|, \|a(t,x)\|\} \leq M.$$
		We also observe that for each $u\in \mathbb{A}_1$,
		\begin{eqnarray*}
			\ds\sum_j \Int_{\mathbb{R}} (\left(1 + [u^{*}  \eta(z)]_j\right)^{-\frac{\theta}{2}}-1)\, \nu_j(dz_j) &\geq&  -\ds\sum_j \Int_{\mathbb{R}} \nu_j(dz_j) \\
			& = & -\ds\sum_j\nu_j(\mathbb{R})>-\infty,
		\end{eqnarray*}
		using the finiteness of the measure $\nu_j$.
		Also, \textbf{(A3)} gives $u^*a(t,x)u \geq \delta_1 \|u\|^2.$  Hence by using the above mentioned bounds, we can write, $g_\theta(t,x,u)  \geq \bar{g}_\theta(u)$,  where
		\begin{align*}
		\bar{g}_\theta(u) & = \left(-\frac{\theta}{2}(M + M\|u\|) + \frac{\theta}{4}(1+\frac{\theta}{2})\delta_1 \|u\|^2 -\ds\sum_j\nu_j(\mathbb{R}) \right).
		\end{align*}
		Since $\bar{g}_\theta(u)$ is independent of $t$ and $\uparrow\infty$ as $\|u\|\uparrow \infty$,  $h_\theta (t,x)$ is bounded below. Now we will show that for fixed $t$ and $x$, $g_\theta(t,x,u)$ is a strictly convex function of variable $u \in \mathbb{A}_1$. For fixed $t$ and $x$, let $H$ denote the Hessian matrix for $g_\theta$. Then $(p,q)$-th element of $H$,
		\begin{align*}
		H_{pq}=&\frac{\partial^2 g_\theta}{\partial u_p \partial u_q}\\
		=&\frac{1}{2}\frac{\theta}{2}\left(\frac{\theta}{2}+1\right)a_{pq}(t,x)+\ds\sum_j \Int_{\mathbb{R}} \frac{\theta}{2}\left(\frac{\theta}{2}+1\right)\eta_{pj}(z_j)\eta_{qj}(z_j)\left(1 + [u^{*}  \eta(z)]_j\right)^{-\frac{\theta}{2}-2}\, \nu_j(dz_j).
		\end{align*}
		Since $u$ is in $\mathbb{A}_1$, $(1 + [u^{*}  \eta(z)]_j$ is bounded below by a positive $\delta$. Hence, in addition to that using \textbf{(A3)}, there exists $m>0$ such that $H-mI$ is a positive definite matrix and this proves the strict
		convexity of $g_\theta(t,x,u)$ on variable $u$. Therefore $ \mathbb{A}_2:=\mathbb{A}_1\bigcap {\bar{g}_{\theta}}^{-1} \left((-\infty,1]\right)$ is a non-empty convex compact set. Hence,  $(t,x)\twoheadrightarrow \mathbb{A}_2$ is a compact-valued correspondence. Since $h_\theta$ is negative, from \eqref{htheta}, we can write
		\begin{eqnarray*}
			h_\theta(t,x)&=&\inf\{g_\theta(t,x,u)|u\in\mathbb{A}_2 \}.
		\end{eqnarray*}
		We also note that $(t,x,u)\mapsto g_\theta(t,x,u)$ is jointly continuous.
		Since $(t,x)\twoheadrightarrow \mathbb{A}_2$ is continuous, then it follows from the Maximum Theorem (\cite{Rang},Th. $9.14$) that $h_{\theta}(t,x)$ is continuous with respect to $(t,x)$.
		Hence (i) is proved.
		
		(ii) Follows from the continuity of $h_{\theta}(t,x)$.
		
		(iii) The set of minimizers is defined by
		\begin{align*}
		u^\star(t,x)&= \text{argmin}\{g_\theta(t,x,u)|u\in\mathbb{A}_2\}.
		\end{align*}
		Again by using (\cite{Rang},Th. $9.14$), $(t,x)\twoheadrightarrow u^{\star}(t,x)$ is upper semi-continuous. Since $g_\theta(t,x,u)$ is strictly convex in $u$, for each $t \in [0,T]$ and $x \in \mathcal{X}^{n+1}$ there exist only one element in $u^{\star}(t,x)$. By abuse of notation, we denote that element by $u^{\star}(t,x)$ itself. Since a single-valued upper semi-continuous correspondence is continuous, $u^{\star}(t,x)$ is a continuous function.
		
		(iv) Since $u^\star$ is continuous in $t$, there exists a positive constant $M$ such that
		$ \|u^\star(t, x)\|<M$ for all $t\in [0,T]$, $x\in \mathcal{X}^{n+1} $. Thus $\bar{u}^\star$ is bounded. Since $\bar{u}^\star$ does not depend on $v$, the Lipschitz conditions of Theorem 1.19 of \cite{OkSu} are satisfied. Again since $\bar{u}^\star$ is bounded, all growth conditions are also satisfied. Therefore Definition \ref{defi1}(ii) is satisfied and this completes the proof.
	\end{proof}
	
	\subsection{Volterra Integral equation}\label{tvie}
	In order to study \eqref{eq:hjbPdeRS}-\eqref{eq:hjbPdeTerRS} we consider the following integral equation with the previous notations and for all $(t,x,y)\in \bar{\mathscr{D}}$
	\begin{align}\label{intrisk}
	\psi(t,x,y)&=\ds\sum_{l=0}^{n}P_{t,x,y}(\ell(t)=l)\Big[(1-F_{\tau^l|l}(T-t|x,y))e^{H_\theta(t,T,x)} +\int_{0}^{T-t}e^{H_\theta(t,t+r,x)}\ds\times \nonumber \\
	& \sum_{j\neq x^l}p^l_{x^lj}(y^l+r)\psi(t+r,R^l_jx,R^l_0(y+r\mathbf{1}))f_{\tau^l|l}(r|x,y)dr \Big].
	\end{align}
	\vskip .2in
	\noi Equation \eqref{intrisk} is a Volterra integral equation of second kind. We note that the boundary of $\bar{\mathscr{D}}$ has many facets. For $t=T$, we directly obtain from \eqref{intrisk}, $\psi(T,x,y)=1$. Hence no additional terminal conditions are required. Although the values of $\psi$ in facets $\bar{\mathscr{D}}\cap \{(t,x,y)|y^l =0,x \in \mathcal{X}^{n+1}, t\in [0,T]\}$ are not directly followed but can be obtained by solving the integral equation on the facets.
	\begin{Proposition}\label{IESolnRS}
		(i) The integral equation \eqref{intrisk} has a unique solution in $C_b(\bar{\mathscr{D}})$, and (ii) the solution is in the \emph{dom}($D_{t,y}$).
	\end{Proposition}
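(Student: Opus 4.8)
The plan is to read \eqref{intrisk} as a fixed-point equation $\psi=\mathcal{T}\psi$, where $\mathcal{T}$ is the integral operator defined by the right-hand side of \eqref{intrisk}, and to solve it on the Banach space $C(\mathscr{\bar{D}})$ with the supremum norm. First I would verify that $\mathcal{T}$ maps $C(\mathscr{\bar{D}})$ into itself. All the prefactors occurring in the kernel, namely $P_{t,x,y}(\ell(t)=l)$, $F_{\tau^l|l}(T-t|x,y)$, $p^l_{x^lj}(y^l+r)$, $f_{\tau^l|l}(r|x,y)$, and $e^{H_\theta(t,t+r,x)}$, are continuous on $\mathscr{\bar{D}}$ by Lemma \ref{theo1} and Proposition \ref{hContinuityRS}, and each is uniformly bounded: $P_{t,x,y}(\ell(t)=l)$, $F_{\tau^l|l}$ and $p^l_{x^lj}$ are probabilities hence lie in $[0,1]$; $e^{H_\theta}\le 1$ since $h_\theta<0$ forces $H_\theta\le 0$ by Proposition \ref{hContinuityRS}(i); and the product $P_{t,x,y}(\ell(t)=l)\,f_{\tau^l|l}(r|x,y)$ collapses, via the identities of Lemma \ref{theo1}, to $\prod_{m\neq l}(1-F^m(r+y^m|x^m))f^l(r+y^l|x^l)\big/\prod_m(1-F^m(y^m|x^m))$, whose numerator is bounded on the compact range of arguments $[0,T]$ and whose denominator is bounded below away from zero because $F^m(y^m|x^m)\le F^m(T|x^m)<1$. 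Hence there is a constant $C>0$, independent of $(t,x,y)$, dominating the whole kernel.

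For part (i) I would obtain existence and uniqueness simultaneously by the standard Volterra contraction argument. For two inputs put $g(s):=\sup_{x,y}|\psi_1(s,x,y)-\psi_2(s,x,y)|$; the substitution $s=t+r$ turns the integral over $[0,T-t]$ into one over $[t,T]$ and gives $\sup_{x,y}|(\mathcal{T}\psi_1-\mathcal{T}\psi_2)(t,x,y)|\le C\int_t^T g(s)\,ds$, since the non-integral term of \eqref{intrisk} does not involve $\psi$. Iterating this backward-in-time inequality yields $\|\mathcal{T}^N\psi_1-\mathcal{T}^N\psi_2\|_\infty\le \frac{(CT)^N}{N!}\|\psi_1-\psi_2\|_\infty$, so $\mathcal{T}^N$ is a contraction for large $N$, and the generalized Banach fixed-point theorem produces a unique $\psi\in C(\mathscr{\bar{D}})$ with $\psi=\mathcal{T}\psi$. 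The one delicate point here is continuity of $\mathcal{T}\psi$ up to the boundary facets where $y^l\in\{0,t\}$, which I would settle using $F^l(0\,|\,\cdot)=0$ and $F^l(y^l|\cdot)<1$.

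For part (ii) the key observation is that the change of variables $s=t+r$ aligns the integral with the characteristic direction of $D_{t,y}$. After the substitution, the argument of $\psi$ inside the integral is $\big(s,R^l_jx,R^l_0(y+(s-t)\mathbf{1})\big)$, and under the shift $(t,y)\mapsto(t+\vp,y+\vp\mathbf{1})$ this point is \emph{invariant}, since $y+\vp\mathbf{1}+(s-t-\vp)\mathbf{1}=y+(s-t)\mathbf{1}$; likewise $p^l_{x^lj}(y^l+r)$ and the factors $1-F^m(r+y^m|x^m)$ are invariant. Consequently $\mathcal{T}\psi$ can be differentiated along $D_{t,y}$ \emph{without} differentiating $\psi$ itself: only the explicitly $t$-dependent factor $e^{H_\theta(t,s,x)}$, the denominator $\prod_m(1-F^m(y^m|x^m))$, the factor $1-F_{\tau^l|l}(T-t|x,y)$, and the lower endpoint of integration respond to the shift. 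Differentiating the lower endpoint produces a boundary term, the integrand evaluated at $s=t$ involving the continuous quantity $\psi(t,R^l_jx,R^l_0y)$; differentiating the remaining prefactors is justified by the $C^1$/$C^2$ regularity in Lemma \ref{theo1}(ii)--(iii) and Proposition \ref{hContinuityRS}(ii), and the emerging derivatives are exactly those recorded in Lemma \ref{theo1}(iv)--(v). Assembling these pieces shows that $D_{t,y}(\mathcal{T}\psi)$ exists and is continuous on $\mathscr{D}$, and since $\psi=\mathcal{T}\psi$ this places $\psi\in\mathrm{dom}(D_{t,y})$.

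I expect the main obstacle to be part (ii): one must arrange the substitution and the differentiation under the integral sign so that the limit defining $D_{t,y}$ genuinely exists and is continuous, carefully separating the covariantly invariant pieces from the genuinely $t$- and $y$-varying ones and controlling the boundary contribution. Part (i) is, by comparison, routine Volterra theory once the uniform kernel bound and the boundary continuity are in hand.
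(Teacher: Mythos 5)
Your proposal is correct, and its architecture is the same as the paper's: read \eqref{intrisk} as a fixed-point equation for an operator on $C(\bar{\mathscr{D}})$, and for part (ii) exploit the change of variables $s=t+r$ that aligns the integral with the characteristic direction of $D_{t,y}$, so that $\psi$ itself is never differentiated and only the explicit prefactors and the moving endpoint contribute. The two places where you diverge are worth recording. First, for the contraction: the paper shows the operator is a \emph{one-step} contraction, with constant $K_1=\sum_l P_{t,x,y}(\ell(t)=l)\int_0^{T-t}f_{\tau^l|l}(r|x,y)\,dr=\sum_l P_{t,x,y}(\ell(t)=l)F_{\tau^l|l}(T-t|x,y)<1$, which uses $e^{H_\theta}\le 1$, $\sum_{j\ne x^l}p^l_{x^lj}=1$, and $F^l<1$; strictly speaking one also needs $\sup_{\bar{\mathscr{D}}}K_1<1$, which follows from continuity on the compact $\bar{\mathscr{D}}$. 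Your iterated Volterra estimate $\|\mathcal{T}^N\psi_1-\mathcal{T}^N\psi_2\|\le\frac{(CT)^N}{N!}\|\psi_1-\psi_2\|$ sidesteps that uniformity question entirely at the price of only needing the uniform kernel bound, so it is the more robust variant. Second, for part (ii): your collapse of $P_{t,x,y}(\ell(t)=l)\,f_{\tau^l|l}(r|x,y)$ into $\prod_{m\neq l}\bigl(1-F^m(r+y^m|x^m)\bigr)f^l(r+y^l|x^l)\big/\prod_m\bigl(1-F^m(y^m|x^m)\bigr)$ makes the diagonal invariance manifest and isolates all the $\varepsilon$-variation in the external factor $\prod_m(1-F^m(y^m|x^m))^{-1}$, $e^{H_\theta}$ and the endpoint; the paper instead keeps the conditional density $f_{\tau^l|l}(r-\varepsilon|x,y+\varepsilon)$ inside the integral, justifies the passage to the limit by the mean value theorem plus bounded convergence, and must then evaluate the extra limit $\lim_{\varepsilon\to 0}\frac{1}{\varepsilon}\bigl[\digamma^l_\infty(x,y)/\digamma^l_\infty(x,y+\varepsilon\mathbf{1})-1\bigr]$ via \eqref{eq2}, arriving at the same formula \eqref{Dbeta}. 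Do note one bookkeeping point in your write-up: $f_{\tau^l|l}$ itself is \emph{not} invariant under the diagonal shift---its denominator $\digamma^l_\infty(x,y)$ genuinely moves---and your claim of invariance is correct only after the cancellation against $P_{t,x,y}(\ell(t)=l)$; since you do list $\prod_m(1-F^m(y^m|x^m))^{-1}$ among the varying pieces, your accounting is consistent, but the sentence asserting invariance of the ``factors $1-F^m(r+y^m|x^m)$'' should be phrased at the level of the collapsed kernel to avoid the appearance of a gap.
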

	\proof (i) We first observe that the solution to the integral equation \eqref{intrisk} is a fixed point of the operator $A$, where
	\begin{align*}
	\no A\psi(t,x,y)&:=\ds\sum_{l=0}^{n}P_{t,x,y}(\ell(t)=l)\Big[(1-F_{\tau^l|l}(T-t|x,y))e^{H_\theta(t,T,x)}\\
	&+\int_{0}^{T-t}e^{H_\theta(t,t+r,x)}\ds\sum_{j\neq x^l}p^l_{x^lj}(y^l+r)\psi(t+r,R^l_jx,R^l_0(y+r\mathbf{1}))f_{\tau^l|l}(r|x,y)\Big]dr.
	\end{align*}
	It is easy to check that for each $\psi\in C_b(\bar{\mathscr{D}}), A\psi: \bar{\mathscr{D}}\to (0,\infty)$ is bounded continuous. Now
	\begin{eqnarray*}
		\lefteqn{ \|A\psi-A\tilde{\psi}\|}\\
		&=&\sup_{\bar{\mathscr{D}}}|A\psi-A\tilde{\psi}|\\
		& =& \sup_{\bar{\mathscr{D}}}\Big|\ds\sum_{l=0}^{n}P_{t,x,y}(\ell(t)=l)[\int_{0}^{T-t}e^{H_\theta(t,t+r,x)}\ds\sum_{j\neq x^l}p^l_{x^lj}(y^l+r)\times\\
		&&[\psi(t+r,R^l_jx,R^l_0(y+r\mathbf{1}))-\tilde{\psi}(t+r,R^l_jx,R^l_0(y+r\mathbf{1}))]f_{\tau^l|l}(r|x,y)dr]\Big|\\
		& \leq & \ds\sum_{l=0}^{n}P_{t,x,y}(\ell(t)=l)\int_{0}^{T-t}e^{H_\theta(t,t+r,x)}\ds\sum_{j\neq x^l}p^l_{x^lj}(y^l+r)f_{\tau^l|l}(r|x,y)dr\|\psi-\tilde{\psi}\|\\
		& < & K_1\|\psi-\tilde{\psi}\|,
	\end{eqnarray*}
	where $K_1:=\ds\sum_{l=0}^{n}P_{t,x,y}(\ell(t)=l)\int_{0}^{T-t}f_{\tau^l|l}(r|x,y)dr$, since the row sum of conditional probability matrix is $1$ and $h_\theta <0$ by Proposition \ref{hContinuityRS}(i). Since $F^l(\bar{y})$ is strictly less than $1$, \eqref{Ftaullv} implies that $F_{\tau^l|l}(r|x,y)<1$, for all $r\ge0$. Hence $K_1<1$.
	Therefore, $A$ is a contraction. Thus a direct application of Banach fixed point theorem ensures the existence
	and uniqueness of the solution to \eqref{intrisk}.
	
	\noi (ii) We denote the unique solution by $\psi$. Next we show that $\psi\in $ dom$(D_{t,y})$. To this end, it is sufficient to show that $A:C_b(\mathscr{D})\to \text{dom}(D_{t,y})$.
	The first term of $A\psi$ is in dom$(D_{t,y})$, which follows from Lemma \ref{theo1} (iv) and Proposition \ref{hContinuityRS} (ii). Now to show that the remaining term
	\begin{eqnarray*}
		\beta_l(t,x,y):=\int_{0}^{T-t}e^{H_\theta(t,t+r,x)}\ds\sum_{j\neq x^l}p^l_{x^lj}(y^l+r)\psi(t+r,R^l_jx,R^l_0(y+r\mathbf{1}))f_{\tau^l|l}(r|x,y)]dr,
	\end{eqnarray*}
	is also in the dom($D_{t,y}$) for any $\psi \in C(\mathscr{D})$, we need to check if the following limit
	\begin{align*}
	\lim_{\vp\rightarrow 0}\frac{1}{\vp} &\Big[\int_{0}^{T-t-\vp}e^{H_\theta(t+\vp,t+r+\vp,x)}\ds\sum_{j\neq x^l}p^l_{x^lj}(y^l+r+\vp)
	\psi(t+r+\vp,R^l_jx,R^l_0(y+(r+\vp) \mathbf{1}))\\
	&\times f_{\tau^l|l}(r|x,y+\vp))dr-\int_{0}^{T-t}e^{H_\theta(t,t+r,x)}\ds\sum_{j\neq x^l}p^l_{x^lj}(y^l+r)\psi(t+r,R^l_jx,R^l_0(y+r\mathbf{1}))\\
	&\times f_{\tau^l|l}(r|x,y)dr\Big],
	\end{align*}
	exists and, the limit is continuous in $\mathscr{D}$. If the limit exists, the limiting value is clearly
	$D_{t,y}\beta_l(t,x,y)$. By a suitable substitution of variables in the integral, the expression in the above limit can be rewritten, using \eqref{exftaul}, as
	\begin{eqnarray}\label{qeD}
	\no \lefteqn{\frac{1}{\vp}\Big[\int_{\vp}^{T-t}e^{H_\theta(t+\vp,t+r,x)}\ds\sum_{j\neq x^l}p^l_{x^lj}(y^l+r)\psi(t+r,R^l_jx,R^l_0(y+r\mathbf{1}))f_{\tau^l|l}(r-\vp|x,y+\vp)dr}\\
	\no &&-\int_{0}^{T-t}e^{H_\theta(t,t+r,x)}\ds\sum_{j\neq x^l}p^l_{x^lj}(y^l+r)\psi(t+r,R^l_jx,R^l_0(y+r\mathbf{1}))f_{\tau^l|l}(r|x,y)dr\Big]\\
	\no &=&\int_{0}^{T-t}e^{H_\theta(t,t+r,x)}\ds\sum_{j\neq x^l}p^l_{x^lj}(y^l+r) \psi(t+r,R^l_jx,R^l_0(y+r\mathbf{1}))\\
	\no && \times\frac{1}{\vp}\Big(e^{-H_\theta(t,t+\vp,x)} f_{\tau^l|l}(r-\vp|x,y+\vp) -f_{\tau^l|l}(r|x,y)\Big)dr -\frac{1}{\vp}\int_{0}^{\vp}e^{H_\theta(t+\vp,t+r,x)} \times \\
	&& \ds\sum_{j x^l}p^l_{x^lj}(y^l+r) \psi(t+r,R^l_jx,R^l_0(y+r\mathbf{1})) f_{\tau^l|l}(r-\vp|x,y+\vp) dr.
	\end{eqnarray}
	By Lemma \ref{theo1} (iv), $f_{\tau^l|l}(T-t|x,y)$ is in $dom(D_{t,y})$. Thus $D_{t,y}f_{\tau^l|l}(T-t|x,y)$ is bounded on $[0,T-t] $ by a positive constant $K_2$. Hence  by the mean value theorem on $f_{\tau^l|l}(T-t|x,y)$, the integrand of the first integral of \eqref{qeD} is uniformly bounded. Therefore, using the bounded convergence theorem, the integral converges as $\vp\rightarrow 0$.  The second integral of \eqref{qeD} converges as the integrand is continuous at $r=0$. Now we compute
	\begin{eqnarray*}
		\lefteqn{D_{t,y}\beta_l(t,x,y)}\\
		&=&\int_{0}^{T-t}e^{H_\theta(t,t+r,x)}\ds\sum_{j\neq x^l}p^l_{x^lj}(y^l+r) \psi(t+r,R^l_jx,R^l_0(y+r\mathbf{1}))\\
		\no && \Big(\frac{d}{dw} e^{-H_\theta(t,t+w,x)}\big|_{w=0}f_{\tau^l|l}(r|x,y)+f_{\tau^l|l}(r|x,y) \times \\
		\no && \lim_{\vp\to 0} \frac{1}{\vp} \left[ \frac{\int_0^\infty\prod_{m\neq l}(1-F^m(s+y^m|x^m))f^l(s+y^l|x^l)ds}
		{\int_0^\infty\prod_{m\neq l} (1-F^m(s+y^m+\vp|x^m)) f^l(s+y^l+\vp|x^l)ds} -1\right]\Big)dr\\
		&&-\ds\sum_{j\neq x^l}p^l_{x^lj}(y^l)\psi(t,R^l_jx,R^l_0y)f_{\tau^l|l}(0|x,y),
	\end{eqnarray*}
	using Lemma \ref{theo1} (iii). From \eqref{eq2} we know $\frac{\pa}{\pa y}\int_0^\infty\prod_{m\neq l}(1-F^m(s+y^m|x^m))f^l(s+y^l|x^l)ds = -\prod_{m\neq l}(1-F^m(y^m|x^m))f^l(y^l|x^l)$, therefore $D_{t,y}\beta_l(t,x,y)$ can be rewritten using \eqref{exftaul} as
	\begin{align}\label{Dbeta}
	\no& \int_{0}^{T-t}e^{H_\theta(t,t+r,x)}\ds\sum_{j\neq x^l}p^l_{x^lj}(y^l+r) \psi(t+r,R^l_jx,R^l_0(y+r\mathbf{1}))(- h_\theta(t,x)+f_{\tau^l|l}(0|x,y)) \times\\
	\no & f_{\tau^l|l}(r|x,y) dr-\ds\sum_{j\neq x^l}p^l_{x^lj}(y^l)\psi(t,R^l_jx,R^l_0y)f_{\tau^l|l}(0|x,y)\\
	=&[- h_\theta(t,x)+f_{\tau^l|l}(0|x,y)]\beta_l(t,x,y)-\ds\sum_{j\neq x^l}p^l_{x^lj}(y^l)\psi(t,R^l_jx,R^l_0y)f_{\tau^l|l}(0|x,y).
	\end{align}
	Clearly, \eqref{Dbeta} is in $C(\mathscr{D})$. Hence $\beta_l(t,x,y)$ is in the dom($D_{t,y}$). Hence the right hand side of \eqref{intrisk}  is in the dom($D_{t,y}$) for any $\psi\in C_b(\bar{\mathscr{D}}) $. Thus (ii) holds.
	\qed
	
	\subsection{The linear first order equation}\label{lfoe}
	\begin{Proposition}\label{IVPUniqRS}
		The unique solution to \eqref{intrisk} also solves the terminal value problem \eqref{eq:hjbPdeRS}-\eqref{eq:hjbPdeTerRS}.
	\end{Proposition}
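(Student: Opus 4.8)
The plan is to use that, by Proposition \ref{IESolnRS}, the unique solution $\psi$ of \eqref{intrisk} is the fixed point $\psi=A\psi$ and already lies in $\mathrm{dom}(D_{t,y})$; I would apply $D_{t,y}$ to both sides of $\psi=A\psi$ and verify that the result is exactly \eqref{eq:hjbPdeRS}. The terminal condition \eqref{eq:hjbPdeTerRS} comes for free: evaluating \eqref{intrisk} at $t=T$ makes every inner integral vanish (its upper limit is $0$), while $1-F_{\tau^l|l}(0|x,y)=1$ and $e^{H_\theta(T,T,x)}=1$, so that $\psi(T,x,y)=\sum_{l}P_{T,x,y}(\ell(T)=l)=1$.

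For the differential identity I would write $A\psi=\sum_{l}P_{t,x,y}(\ell(t)=l)\,[G_l+\beta_l]$, with $G_l:=(1-F_{\tau^l|l}(T-t|x,y))e^{H_\theta(t,T,x)}$ and $\beta_l$ the integral term from the proof of Proposition \ref{IESolnRS}, and differentiate termwise with the Leibniz rule (valid for $D_{t,y}$, being a directional-derivative limit, because every factor is continuous and lies in $\mathrm{dom}(D_{t,y})$). All the pieces are in hand: Lemma \ref{theo1}(iv) gives $D_{t,y}P_{t,x,y}(\ell(t)=l)$ and $D_{t,y}F_{\tau^l|l}(T-t|x,y)$; applying the fundamental theorem of calculus to $H_\theta(t,T,x)=\int_t^T h_\theta(s,x)\,ds$ gives $D_{t,y}e^{H_\theta(t,T,x)}=-h_\theta(t,x)e^{H_\theta(t,T,x)}$; and \eqref{Dbeta} supplies $D_{t,y}\beta_l$. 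The first three combine into the clean relation
\[ D_{t,y}G_l=\big[f_{\tau^l|l}(0|x,y)-h_\theta(t,x)\big]G_l, \]
so that $G_l$ and $\beta_l$ carry the same multiplier and
\[ D_{t,y}(G_l+\beta_l)=\big[f_{\tau^l|l}(0|x,y)-h_\theta(t,x)\big](G_l+\beta_l)-f_{\tau^l|l}(0|x,y)\sum_{j\neq x^l}p^l_{x^lj}(y^l)\psi(t,R^l_jx,R^l_0y). \]

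The decisive step is the cancellation, which rests on Lemma \ref{theo1}(v), $f_{\tau^l|l}(0|x,y)P_{t,x,y}(\ell(t)=l)=f_{\tau^l}(0|x^l,y^l)$. Assembling $D_{t,y}(A\psi)=\sum_l\big[(D_{t,y}P_{t,x,y}(\ell(t)=l))(G_l+\beta_l)+P_{t,x,y}(\ell(t)=l)\,D_{t,y}(G_l+\beta_l)\big]$ and using $D_{t,y}P_{t,x,y}(\ell(t)=l)=\sum_m f_{\tau^m}(0|x^m,y^m)P_{t,x,y}(\ell(t)=l)-f_{\tau^l}(0|x^l,y^l)$, the term $-f_{\tau^l}(0|x^l,y^l)(G_l+\beta_l)$ produced by $D_{t,y}P$ is exactly annihilated by the $f_{\tau^l|l}(0|x,y)P_{t,x,y}(\ell(t)=l)(G_l+\beta_l)$ term. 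After substituting $\psi=\sum_l P_{t,x,y}(\ell(t)=l)(G_l+\beta_l)$, what survives is $\sum_m f_{\tau^m}(0|x^m,y^m)\psi-h_\theta(t,x)\psi-\sum_l f_{\tau^l}(0|x^l,y^l)\sum_{j\neq x^l}p^l_{x^lj}(y^l)\psi(t,R^l_jx,R^l_0y)$. Finally I would convert the holding-time densities into transition rates using Lemma \ref{theo1}(v) together with the identity $\la^l_{ij}(\bar y)=p^l_{ij}(\bar y)f^l(\bar y|i)/(1-F^l(\bar y|i))$ and $\sum_{j\neq x^m}p^m_{x^mj}=1$, which give $f_{\tau^m}(0|x^m,y^m)=\sum_{j\neq x^m}\la^m_{x^mj}(y^m)$ and $f_{\tau^l}(0|x^l,y^l)p^l_{x^lj}(y^l)=\la^l_{x^lj}(y^l)$; substituting these reproduces precisely the local and nonlocal jump terms of \eqref{eq:hjbPdeRS}.

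I expect the main difficulty to be organisational rather than analytic: the genuinely delicate point---differentiability of $\beta_l$ along $(1,\mathbf{1})$ and continuity of $D_{t,y}\beta_l$---was already settled in Proposition \ref{IESolnRS}(ii), so the work here is the careful bookkeeping of the Leibniz expansion and, above all, recognising the cancellation forced by Lemma \ref{theo1}(v). The only subtlety worth spelling out is the Leibniz rule for $D_{t,y}$ on $\bar{\mathscr{D}}$; it follows by writing the difference quotient of a product as the usual two-term sum and passing to the limit, using continuity of each factor.
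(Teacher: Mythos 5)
Your proposal is correct and follows essentially the same route as the paper's own proof: verify \eqref{eq:hjbPdeTerRS} by setting $t=T$ in \eqref{intrisk}, then apply $D_{t,y}$ to the right-hand side using Lemma \ref{theo1}(iv) for $D_{t,y}P_{t,x,y}(\ell(t)=l)$ and $D_{t,y}F_{\tau^l|l}(T-t|x,y)$, the identity \eqref{Dbeta} for $D_{t,y}\beta_l$, and finally Lemma \ref{theo1}(v) together with $\la^l_{ij}(\bar y)=p^l_{ij}(\bar y)f^l(\bar y|i)/(1-F^l(\bar y|i))$ to collapse the surviving terms into \eqref{eq:hjbPdeRS}. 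Your write-up is in fact somewhat cleaner than the paper's, in that it isolates the multiplier identity $D_{t,y}G_l=[f_{\tau^l|l}(0|x,y)-h_\theta(t,x)]G_l$ and makes the cancellation via Lemma \ref{theo1}(v) explicit, but the underlying argument is identical.
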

	\proof
	Let $\psi$ be the solutions of the integral equation \eqref{intrisk}. Then by substituting $t=T$ in \eqref{intrisk}, \eqref{eq:hjbPdeTerRS} follows. Using the results from the proof of Lemma \ref{theo1}, Proposition \ref{IESolnRS}, Lemma \ref{theo1}(iv) and \eqref{Dbeta}, we have
	\begin{align*}
	D_{t,y}\psi(t,x,y)=&\ds\sum_{l=0}^{n}\Big[\ds\sum_{m=0}^n f_{\tau^m}(0|x^m,y^m)P_{t,x,y}(\ell(t)=l)-f_{\tau^l}(0|x^l,y^l)\Big][1-F_{\tau^l|l}(T-t|x,y)]\times\\
	&e^{H_\theta(t,T,x)}-\ds\sum_{l=0}^{n}P_{t,x,y}(\ell(t)=l)\Big[f_{\tau^l|l}(0|x,y)(F_{\tau^l|l}(T-t|x,y)-1)\Big]\times\\
	& e^{H_\theta(t,T,x)}-h_\theta(t,x)\ds\sum_{l=0}^{n}P_{t,x,y}(\ell(t)=l)[1-F_{\tau^l|l}(T-t|x,y)]\times\\
	&e^{H_\theta(t,T,x)}+\ds\sum_{l=0}^{n}\Big[\ds\sum_r f_{\tau^m}(0|x^m,y^m)P_{t,x,y}(\ell(t)=l)-f_{\tau^l}(0|x^l,y^l)\Big]\beta_l(t,x,y)\\
	&+\ds\sum_{l=0}^{n}P_{t,x,y}(\ell(t)=l)\Bigg(\Big(-h_\theta(t,x)+f_{\tau^l|l}(0|x,y)\Big)\beta_l(t,x,y)\\
	&-\ds\sum_{j\neq x^l}p^l_{x^lj}(y^l)\psi(t,R^l_jx,R^l_0y)f_{\tau^l|l}(0|x,y)\Bigg).
	\end{align*}
	Using \eqref{intrisk} and the equality in Lemma \ref{theo1}(v), the right hand side of above equation can be rewritten as
	\begin{align*}
	\ds\sum_{l=0}^n&\frac{f^l(y^l|x^l)}{1-F^l(y^l|x^l)}\Big[\psi(t,x,y)-\ds\sum_{j\neq x^l}p^l_{x^lj}(y^l)\psi(t,R^l_jx,R^l_0y)\Big]-h_\theta(t,x)\psi(t,x,y)\\
	&=-\ds\sum_{l=0}^n\ds\sum_{j\neq x^l}\la^l_{x^lj}(y^l)\Big[\psi(t,R^l_jx,R^l_0y)-\psi(t,x,y)\Big]-h_\theta(t,x)\psi(t,x,y).
	\end{align*}
	Hence $\psi$ satisfies \eqref{eq:hjbPdeRS}.
	\qed
	\begin{Proposition}\label{ClassicalRS}
		Let $\psi$ be a bounded classical solution to \eqref{eq:hjbPdeRS}-\eqref{eq:hjbPdeTerRS}. Then $\psi$ solves the integral equation \eqref{intrisk}.
	\end{Proposition}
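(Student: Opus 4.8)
The plan is to integrate the equation \eqref{eq:hjbPdeRS} along the deterministic characteristic flow of the operator $D_{t,y}$, namely the curve $s\mapsto(t+s,x,y+s\mathbf{1})$, which is exactly the motion of $(X,Y)$ in between two consecutive transitions of $X$; in probabilistic language this is precisely the conditioning on the first transition time of $X$ after $t$ that the introductory remark alludes to. Fix $(t,x,y)\in\mathscr{D}$ and put $g(s):=\psi(t+s,x,y+s\mathbf{1})$ for $s\in[0,T-t]$. Because $0<y^l<t$ forces $0<y^l+s<t+s$, the point $(t+s,x,y+s\mathbf{1})$ remains in $\mathscr{D}$ for $s\in[0,T-t)$, and since a classical solution lies in dom$(D_{t,y})$ with $D_{t,y}\psi\in C(\mathscr{D})$, the function $g$ is $C^1$ with $g'(s)=(D_{t,y}\psi)(t+s,x,y+s\mathbf{1})$. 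Substituting \eqref{eq:hjbPdeRS} and using the identity $\sum_{j\neq i}\la^l_{ij}(v)=f^l(v|i)/(1-F^l(v|i))$ (which follows from $F^l(\bar y|i)=1-e^{-\La^l_i(\bar y)}$) converts this into the linear first order ODE
\[
g'(s)-\Big[\sum_l\tfrac{f^l(y^l+s|x^l)}{1-F^l(y^l+s|x^l)}-h_\theta(t+s,x)\Big]g(s)=-\sum_l\sum_{j\neq x^l}\la^l_{x^lj}(y^l+s)\,\psi\big(t+s,R^l_jx,R^l_0(y+s\mathbf{1})\big).
\]

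Next I would solve this ODE with the integrating factor $M(s):=e^{H_\theta(t,t+s,x)}\prod_l\frac{1-F^l(y^l+s|x^l)}{1-F^l(y^l|x^l)}$, where the exponential reproduces the $h_\theta$ term via $H_\theta(t,t+s,x)=\int_t^{t+s}h_\theta$ and the product is the conditional survival probability that no component jumps in $[t,t+s]$. Since $M'=-cM$ with $c$ the bracketed coefficient above, one has $(Mg)'=M\cdot(\text{right-hand side})$, and integrating over $[0,T-t]$ gives $M(T-t)g(T-t)-g(0)=\int_0^{T-t}M(s)(\text{r.h.s.})\,ds$. Here $g(0)=\psi(t,x,y)$, while $g(T-t)=\psi(T,x,y+(T-t)\mathbf{1})=1$ by the terminal condition \eqref{eq:hjbPdeTerRS} (invoking continuity of $\psi$ up to the terminal face). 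Rearranging,
\[
\psi(t,x,y)=M(T-t)+\int_0^{T-t}M(s)\sum_l\sum_{j\neq x^l}\la^l_{x^lj}(y^l+s)\,\psi\big(t+s,R^l_jx,R^l_0(y+s\mathbf{1})\big)\,ds.
\]

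It then remains to match these two terms with the two terms of \eqref{intrisk} using Lemma \ref{theo1}. For the integral term I would insert $\la^l_{x^lj}(v)=p^l_{x^lj}(v)f^l(v|x^l)/(1-F^l(v|x^l))$, so that $M(s)\la^l_{x^lj}(y^l+s)=e^{H_\theta(t,t+s,x)}p^l_{x^lj}(y^l+s)\,P_{t,x,y}(\ell(t)=l)\,f_{\tau^l|l}(s|x,y)$ by the explicit formulas of Lemma \ref{theo1}(i),(iii); this reproduces the integrand of \eqref{intrisk} exactly. For $M(T-t)$ the key is the survival identity $\sum_l P_{t,x,y}(\ell(t)=l)\big(1-F_{\tau^l|l}(r|x,y)\big)=\prod_m\frac{1-F^m(y^m+r|x^m)}{1-F^m(y^m|x^m)}$, which turns $M(T-t)$ into $\sum_l P_{t,x,y}(\ell(t)=l)\big(1-F_{\tau^l|l}(T-t|x,y)\big)e^{H_\theta(t,T,x)}$. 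I expect this identity to be the main obstacle, and I would establish it from $\frac{d}{ds}\prod_m(1-F^m(y^m+s|x^m))=-\sum_l f^l(y^l+s|x^l)\prod_{m\neq l}(1-F^m(y^m+s|x^m))$ together with $F^l(\cdot|i)\to 1$ at infinity (guaranteed by $\La^l_i(y)\to\infty$), which makes the boundary term vanish when integrating. Combining the two matched pieces yields exactly \eqref{intrisk}, completing the proof; the remaining care is purely the justification of $C^1$ regularity and continuity of $g$ up to $s=T-t$, together with boundedness of the integrand on the compact interval, which follow from $D_{t,y}\psi\in C(\mathscr{D})$, the boundedness of $h_\theta$ in Proposition \ref{hContinuityRS}, and continuity of $\la^l$, $F^l$, $f^l$ and $\psi$.
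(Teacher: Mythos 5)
Your proof is correct, but it takes a genuinely different route from the paper. The paper argues probabilistically: it views a classical solution $\psi$ as lying in the domain of the generator $\mathcal{A}$ of $(t,X_t,Y_t)$, shows via It\^o's formula that $N_t=e^{\int_0^t h_\theta(s,X_s)ds}\psi(t,X_t,Y_t)$ is a (bounded, hence true) martingale, extracts the Feynman--Kac representation $\psi(t,x,y)=\mathbb{E}_{t,x,y}\bigl[e^{\int_t^T h_\theta(s,X_s)ds}\bigr]$, and then derives \eqref{intrisk} by conditioning successively on $\ell(t)$ and $\tau^l(t)$ and using the Markov property at the first transition. You instead integrate the PDE deterministically along the characteristic $s\mapsto(t+s,x,y+s\mathbf{1})$ of $D_{t,y}$, which is the variation-of-constants counterpart of that conditioning: your integrating factor $M(s)=e^{H_\theta(t,t+s,x)}\prod_l\frac{1-F^l(y^l+s|x^l)}{1-F^l(y^l|x^l)}$ is exactly $e^{H_\theta}$ times the no-jump survival probability, and your two matching identities check out --- the integrand identity follows from $P_{t,x,y}(\ell(t)=l)f_{\tau^l|l}(s|x,y)=\prod_{m\neq l}(1-F^m(s+y^m|x^m))f^l(s+y^l|x^l)/\prod_m(1-F^m(y^m|x^m))$ (Lemma \ref{theo1}(i),(iii)), and the survival identity follows, as you say, by integrating $-\frac{d}{ds}\prod_m(1-F^m(s+y^m|x^m))$ over $[r,\infty)$ with the boundary term killed by $\Lambda^l_i(y)\to\infty$. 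What each approach buys: yours is self-contained and elementary, avoiding stochastic calculus entirely and in particular the paper's local-martingale-to-martingale step (which needs boundedness of $\psi$ on $\bar{\mathscr{D}}$); the paper's route yields as a by-product the explicit stochastic representation \eqref{eq:psi}, which is reused in Section 4 to reduce the state dependence of $\psi$ in the single-process example, so it is not redundant in context. Two minor points you should make explicit: $D_{t,y}$ is a one-sided directional derivative, so concluding $g\in C^1$ from $D_{t,y}\psi\in C(\mathscr{D})$ uses the standard fact that a continuous function with continuous right derivative is continuously differentiable; and your evaluation $g(T-t)=1$ needs $\psi$ to extend continuously to the terminal face, an assumption the paper's own proof also makes implicitly, so this puts you on equal footing rather than creating a gap.
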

	\proof If the PDE \eqref{eq:hjbPdeRS} has a classical solution $\psi$,  then $\psi$ is also in the domain of $\mathcal{A}$, where $\mathcal{A}$ is the infinitesimal generator of the Markov family $\{(r,X_r,Y_r)\}_{r\ge 0}$ starting from $(0,x_0,y_0)$ (say).
	Then we have from \eqref{eq:hjbPdeRS}
	\begin{eqnarray}\label{gen}
	\mathcal{A}\psi(t,x,y)+h_\theta(t,x)\psi(t,x,y)=0.
	\end{eqnarray}
	Consider
	$$\tilde{M}_t := e^{\int_0^t h_\theta(s,X_s)ds}\psi(t,X_t,Y_t).$$
	Then by It\^o's formula,
	$$d\tilde{M}_t=h_\theta(t,X_t)e^{\int_0^t h_\theta(s,X_s)ds}\psi(t,X_t,Y_t)dt+e^{\int_0^t h_\theta(s,X_s)ds}(\mathcal{A}\psi dt+dM_t^{(1)}),$$
	where $\{M_t^{(1)}\}_{t\ge 0}$ is a local martingale
	with respect to $\{\mathcal{F}_t\}_{t\ge 0}$, the usual  filtration generated by $\{(X_t,Y_t)\}_{t\ge 0}$. Thus from \eqref{gen} $\{\tilde{M}_t\}_{t\ge 0}$ is a local martingale.
	From definition of $\tilde{M}_t$, $\sup_{[0,T]}\tilde{M}_t<\|\psi\|e^{\|h_\theta\|T}$ a.s. Thus $\{\tilde{M}_t\}_{t\ge 0}$ is a martingale.
	Therefore by using \eqref{eq:hjbPdeTerRS}, we obtain
	\begin{eqnarray*}
		\psi(t,X_t,Y_t)= e^{\int_0^t -h_\theta(s,X_s)ds}\tilde{M}_t=\mathbb{E}[e^{\int_t^T h_\theta(s,X_s)ds}|\mathcal{F}_t]= [e^{\int_t^T h_\theta(s,X_s)ds}|X_t, Y_t]
	\end{eqnarray*}
	using the Markov property of $(X,Y)$. Thus
	\begin{equation}\label{eq:psi}
	\psi(t,x,y)= \mathbb{E}_{t,x,y}[e^{\int_{t}^{T}h_\theta(s,X_s)ds}],~\forall (t,x,y)\in\bar{\mathscr{D}}.
	\end{equation}
	By conditioning on the component of $X_t$ where the transition happens,
	\begin{eqnarray}\label{iec}
	\no \psi(t,x,y) &=&\mathbb{E}_{t,x,y}[\mathbb{E}_{t,x,y}[e^{[\int_{t}^{T}h_\theta(s,X_s)ds]}|\ell(t)]]\\
	&=&\ds\sum_{l=0}^{n}P_{t,x,y}(\ell(t)=l)\mathbb{E}_{t,x,y}[e^{[\int_{t}^{T}h_\theta(s,X_s)ds]}|\ell(t)=l]
	\end{eqnarray}
	where $\ell(t)$ is described in subsection 2.4 below (A4). Next by conditioning on $\tau^l(t)$ we rewrite
	\begin{eqnarray*}
		\no && \mathbb{E}_{t,x,y}[e^{\int_{t}^{T}h_\theta(s,X_s)ds}|\ell(t)=l]\\
		\no &=&\mathbb{E}_{t,x,y}[\mathbb{E}_{t,x,y}[e^{\int_{t}^{T}h_\theta(s,X_s)ds}|\ell(t)=l,\tau^l(t)]|\ell(t)=l]\\
		\no &=&P_{t,x,y}(\tau^l(t)>T-t|\ell(t)=l)e^{\int_{t}^{T} h_\theta(s,x)ds}\\
		&&+\int_{0}^{T-t}\mathbb{E}_{t,x,y}[e^{\int_{t}^{T}h_\theta(s,X_s)ds}|\ell(t)=l,\tau^l(t)=r] f_{\tau^l|l}(r|x,y) dr.
	\end{eqnarray*}
	Since $X$ is constant on $[t,t+r)$ provided $\ell(t)=l,\tau^l(t)=r$, the above expression is equal to
	\begin{eqnarray*}
		&&[1-F_{\tau^l|l}(T-t|x,y)]e^{H_\theta(t,T,x)}+\int_{0}^{T-t}e^{H_\theta(t,t+r,x)}\\
		&&\times \mathbb{E}_{t,x,y}[\mathbb{E}_{t,x,y}[e^{\int_{t+r}^{T}h_\theta(s,X_s)ds}|X^l_{t+r},\ell(t)=l,\tau^l=r]|\ell(t)=l,\tau^l=r]f_{\tau^l|l}(r|x,y)dr\\
		&=&[1-F_{\tau^l|l}(T-t|x,y)]e^{H_\theta(t,T,x)}\\
		&&+\int_{0}^{T-t}e^{H_\theta(t,t+r,x)}\times \ds\sum_{j\neq x^l}p^l_{x^lj}(y^l+r)\psi(t+r,R^l_jx,R^l_0(y+r\mathbf{1}))f_{\tau^l|l}(r|x,y)dr.
	\end{eqnarray*}
	From \eqref{iec} and the above expression, the desired result follows. \qed
	
	\proof[Proof of Theorem \ref{theo3}]
	The result follows from Proposition \ref{IESolnRS}, Proposition \ref{IVPUniqRS}, and Proposition \ref{ClassicalRS}.\qed
	
	\subsection{Optimal portfolio and verification theorem}
	
	Now we are in a position to derive the expression of optimal portfolio value under risk sensitive criterion. The optimal value is given by
	\begin{align}\label{IEFinalRS}
	\tilde{\varphi}_\theta(v,x,y) &:= \sup_u J_\theta^{u,T}(v,x,y) \nonumber \\
	& = -\frac{2}{\theta}\ln(\varphi_\theta(0,x,y,v)) \nonumber \\
	& = \ln(v)- \frac{2}{\theta}\ln(\psi(0,x,y)),
	\end{align}
	where the function $\varphi_\theta$ is defined in \eqref{expressvf} and $\psi$ is the unique classical solution to \eqref{eq:hjbPdeRS} - \eqref{eq:hjbPdeTerRS} obtained in Theorem \ref{theo3}.
	
	\begin{Remark}
		We note that the study of \eqref{eq:hjbPdeRS}-\eqref{eq:hjbPdeTerRS} becomes much simpler if the coefficients $r,\mu, \sigma$ are independent of time $t$. For time homogeneous case, Proposition \emph{\ref{hContinuityRS}} is immediate. Furthermore, the proof of Theorem \emph{\ref{theo3}} does not need the results given in Proposition \emph{\ref{IESolnRS}}, Proposition \emph{\ref{IVPUniqRS}}, and Proposition \emph{\ref{ClassicalRS}}. Indeed Theorem \emph{\ref{theo3}} can directly be proved by noting the smoothness of terminal condition.
	\end{Remark}
	We conclude this section with a proof of the verification theorem for optimal control problem \eqref{expressvf}. The main result is given in Theorem \ref{markov&admissible}.
	
	\begin{Proposition}\label{verificationth}
		Let $\vf_M$ be as in Theorem \emph{\ref{theo2}}, then
		\begin{itemize}
			\item[(i)] $\vf_M(t,x,y,v)\leq \tilde{J}^{\bar{u},T}_\theta(t,x,y,v)$ for every admissible Markov feedback control $\bar{u}$.
			\item[(ii)] Let $\bar{u}^\star$ be as in Proposition \emph{\ref{hContinuityRS}}(iv), then
            $$\vf_M(t,x,y,v)=\tilde{J}^{\bar{u}^\star,T}_\theta(t,x,y,v).$$
            Hence $\bar{u}^\star$ is optimal in the class of Markov feedback control.
		\end{itemize}
	\end{Proposition}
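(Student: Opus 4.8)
The plan is to establish the verification theorem by combining the martingale method with the optimality of $u^\star$ encoded in the HJB equation. For part (i), I would fix an admissible Markov feedback control $\bar u$ and consider the process $M_t := \vf_M(t,X_t,Y_t,V_t^{\bar u})$ evaluated along the controlled dynamics. Applying It\^o's formula for the jump-diffusion $(t,X_t,Y_t,V_t^{\bar u})$, whose generator is $\mathscr{A}^{\bar u}$ as in \eqref{infGen}, and using that $\vf_M \in \mathscr{G}$ so that the derivatives appearing in the generator are well defined, I would obtain $d\vf_M(t,X_t,Y_t,V_t^{\bar u}) = \mathscr{A}^{\bar u}\vf_M\,dt + dM_t^{(2)}$, where $M_t^{(2)}$ is a local martingale. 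Since $\vf_M$ solves \eqref{hjbRS}, we have $\inf_{u\in\mathbb{A}_1}\mathscr{A}^u\vf_M = 0$, hence $\mathscr{A}^{\bar u}\vf_M \ge 0$ pointwise, so $t\mapsto\vf_M(t,X_t,Y_t,V_t^{\bar u})$ is a local submartingale. After a suitable localization (using a sequence of stopping times $T_N\uparrow T$) I would take expectations to get $\vf_M(t,x,y,v)\le \mathbb{E}_{t,x,y,v}[\vf_M(T_N,\cdots)]$, and then pass to the limit.

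The crux of part (i) is justifying the limit exchange and controlling the local-martingale part. The bounded moment estimates from Lemma \ref{solstate}(ii) are exactly what I would invoke here: since $\vf_M(t,x,y,v)=v^{-\theta/2}\psi(t,x,y)$ with $\psi$ bounded on $\bar{\mathscr{D}}$ (being continuous on a compact-in-$(x,y)$ domain by Proposition \ref{IESolnRS}), and since $V_T^{\bar u}$ has finite moments of all orders uniformly in $u$, the family $\{\vf_M(T_N,X_{T_N},Y_{T_N},V_{T_N}^{\bar u})\}_N$ is uniformly integrable. This lets me pass to the limit to conclude $\vf_M(t,x,y,v)\le \mathbb{E}_{t,x,y,v}[(V_T^{\bar u})^{-\theta/2}] = \tilde J_\theta^{\bar u,T}(t,x,y,v)$, using the terminal condition \eqref{hjbConsRS}.

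For part (ii), the argument is identical but with inequalities replaced by equalities. Taking $\bar u^\star = \{u^\star(t,X_t)\}_{t\ge 0}$, which is admissible by Proposition \ref{hContinuityRS}(iv), the pointwise minimizer property $h_\theta(t,x)=g_\theta(t,x,u^\star(t,x))$ from Proposition \ref{hContinuityRS}(iii) forces $\mathscr{A}^{u^\star}\vf_M = 0$ along the trajectory. Hence $t\mapsto\vf_M(t,X_t,Y_t,V_t^{\bar u^\star})$ is a genuine local martingale, and the same uniform-integrability argument upgrades it to a true martingale, yielding $\vf_M(t,x,y,v)=\mathbb{E}_{t,x,y,v}[(V_T^{\bar u^\star})^{-\theta/2}]=\tilde J_\theta^{\bar u^\star,T}(t,x,y,v)$. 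I expect the main obstacle to be the rigorous It\^o/localization step: verifying that $\vf_M$, though only $C^2$ in $v$ and in $\mathrm{dom}(D_{t,y})$ rather than globally smooth, is still in the domain of the generator $\mathscr{A}^{\bar u}$ so that It\^o's formula applies, and that the integrability conditions of the relevant It\^o formula for jump processes (controlling the jump integrals against $N_j$ and the regime-switching jumps of $X$) are met. This is where the assumptions \textbf{(A1)}, \textbf{(A2)} and the structure of $\mathcal{U}_\delta$ enter, ensuring the jump terms $\vf_M(t,x,y,v(1+[u^\star{}^*\eta(z)]_j))$ are integrable against $\nu_j$.
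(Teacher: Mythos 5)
Your proposal follows essentially the same route as the paper: apply It\^o's formula to $\vf_M(s,X_s,Y_s,V_s^{\bar u})$, use the HJB equation \eqref{hjbRS} to get $\mathscr{A}^{\bar u}\vf_M\ge 0$ pointwise (with equality under $u^\star$ by Proposition \ref{hContinuityRS}(iii)), and invoke the uniform moment bounds of Lemma \ref{solstate}(ii) to justify taking expectations up to time $T$. The only difference is a standard technical variant: the paper verifies directly that each stochastic integral in \eqref{itolem} is a true martingale via the sufficient condition $\mathbb{E}\int_t^s (V_r^{\bar u})^{-\theta}\,dr<\infty$ (exploiting $\vf_M=v^{-\theta/2}\psi$ with $\psi$ bounded, the $\delta$-floor in $\mathcal{U}_\delta$, and finiteness of the measures $\nu_j$), whereas you localize with stopping times and pass to the limit by uniform integrability, which rests on the same estimates.
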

	\proof (i) Consider an admissible Markov feedback control $\bar{u}:=\{\bar{u}_t\}_{t\geq 0}$, where $\bar{u}_t=\tilde{u}(t,X_t,Y_t,V_t)$ and $\vf_M$, the classical solution to \eqref{hjbRS}-\eqref{hjbConsRS} as in \eqref{eq:RSTransf}. Now by It\^o's formula
	\begin{align}\label{itolem}
	&\vf_M(s,X_s,Y_s,V_s^{\bar{u}})-
	\vf_M(t,X_t,Y_t,V_t^{\bar{u}})- \int_t^s\ [\mathscr{A}^{\tilde{u}}\vf_M(r, X_r,Y_r,V_r^{\bar{u}})] dr \nonumber\\
	&= \displaystyle\ds\sum_{j=1}^{m_1}\int_t^s \frac{\partial}{\partial v}\vf_M(r,X_r,Y_r,V^{\bar{u}}_r) V^{\bar{u}}_r[\tilde{u}(r,X_r,Y_r,V_r)^*\si(r,X_r)]_j dW^j_r\nonumber \\
	&+\displaystyle\ds\sum_{j=1}^{m_2}\int_t^s \int_\mathbb{R} 	\Bigg[\vf_M(r,X_r,Y_r,V^{\bar{u}}_{r-}(1+[\tilde{u}(r,X_{r-},Y_{r-},V_{r-})^*\eta(z)]_j))- 	 \vf_M(r,X_r,Y_r,V^{\bar{u}}_{r-})\Bigg]\nonumber\\
	&\tilde{N}_j(dr,dz_j) +\ds\sum_{l=0}^{n}\int_t^s \int_\mathbb{R}
	\Bigg[\vf_M(r,R^l_{X^l_{r-}
		+h^l(X^l_{r-},Y^l_{r-},z_0)}(X_{r-}),R^l_{Y^l_{r-}-g^l(X^l_{r-},Y^l_{r-},z_0)}(Y_{r-}),V^{\bar{u}}_{r-}) \nonumber\\
	&-\vf_M(r,X_{r-},Y_{r-},V^{\bar{u}}_{r-})\Bigg]
	\tilde{\wp}^l(dr,dz_0).
	\end{align}
	We would first show that the right hand side is an $\{\mathscr{F}_s\}_{s\ge 0}$ martingale. Since $\bar{u}$ is admissible, using definition \ref{defi1}(iii), it is sufficient to show, the following square integrability condition
	\begin{align*}
	\mathbb{E}\int_t^s\left[ V^{\bar{u}}_r \frac{\partial}{\partial v} \vf_M(r,X_r,Y_r,V^{\bar{u}}_r)\right]^2dr
	< \infty,
	\end{align*}
	to prove that the first term is a martingale. Again since $\vf_M(t,x,y,v) = v^{-\frac{\theta}{2}}\psi(t,x,y)$, $v\frac{\pa \vf_M}{\pa v}=-\frac{\theta}{2}\vf_M= -\frac{\theta}{2} v^{-\frac{\theta}{2}}\psi(t,x,y)$. Thus using the  boundedness of $\psi$ the above would follow if
	\begin{align}\label{bndv}
	\mathbb{E}\int_t^s\left[ V^{\bar{u}}_r\right]^{-\theta}dr<\infty
	\end{align}
	holds. Now we consider the second integral. Rewriting that term, we obtain
	\begin{align}\label{2ndterm}
	\displaystyle\ds\sum_{j=1}^{m_2}\int_t^s \int_\mathbb{R}\left(V^{\bar{u}}_{r-}\right)^{-\frac{\theta}{2}}\psi(r,X_r,Y_r)
	\left[(1+[\tilde{u}(r,X_{r-},Y_{r-},V_{r-})^*\eta(z)]_j)^{-\frac{\theta}{2}}-1\right]\tilde{N}_j(dr,dz_j).
	\end{align}
	We first observe that $(1+[\tilde{u}(r,X_{r-},Y_{r-},V_{r-})^*\eta(z)]_j)>\delta$, and this implies $$(1+[\tilde{u}(r,X_{r-},Y_{r-},V_{r-})^*\eta(z)]_j)^{-\frac{\theta}{2}}<\delta^{-\frac{\theta}{2}}.$$ Thus the integrand of \eqref{2ndterm} is a product of a bounded function and $\left(V^{\bar{u}}_{r-}\right)^{-\frac{\theta}{2}}$. Since $\nu_j$, the L\'{e}vy  measure of $\tilde{N}_j$ is a finite measure for each $j$, to show \eqref{2ndterm} is an  $\{\mathscr{F}_s\}_{s\ge 0}$ martingale, it is enough to verify \eqref{bndv}. Similarly the third integral can be rewritten as
	\begin{align}\label{3rdterm}
	\ds\sum_{l=0}^{n}\int_t^s \int_{\mathbb{R}}\left(V^{\bar{u}}_{r-}\right)^{-\frac{\theta}{2}}
	&\Bigg[\psi(r,R^l_{X^l_{r-}
		+h^l(X^l_{r-},Y^l_{r-},z_0)}(X_{r-}),R^l_{Y^l_{r-}-g^l(X^l_{r-},Y^l_{r-},z_0)}(Y_{r-})) \no\\
	&-\psi(r,X_{r-},Y_{r-})\Bigg]\tilde{\wp}^l(dr,dz_0).
	\end{align}
	In \eqref{3rdterm} the integrand is a product of a bounded function with compact support and $\left(V^{\bar{u}}_{r-}\right)^{-\frac{\theta}{2}}$. Since, the compensator of $\tilde{\wp}^l(dr,dz_0)$ is $drdz_0$, \eqref{3rdterm} is also an  $\{\mathscr{F}_s\}_{s\ge 0}$ martingale if \eqref{bndv} holds. Thus \eqref{bndv} is the sufficient condition for the right side of \eqref{itolem} to be a martingale. However \eqref{bndv} readily follows from the Proposition \ref{solstate}(ii) and an application of Tonelli's Theorem.
	
	Taking conditional expectation on both sides of \eqref{itolem} given $X_t=x,Y_t=y,V^{\bar{u}}_t=v$ and letting $s\uparrow T$, we obtain
	\begin{align}\label{verification}
	&\mathbb{E}\left[(V^{\bar{u}}_T)^{\frac{\theta}{2}}|X_t=x,Y_t=y,V^{\bar{u}}_t=v\right]-\vf_M(t,x,y,v)\nonumber\\
	&=\mathbb{E}\int_t^T\bigg[\mathscr{A}^{\tilde{u}}\vf_M(r,
	X_r,Y_r,V_r^{\bar{u}})\bigg|X_t=x,Y_t=y,V^{\bar{u}}_t=v\bigg]dr\geq 0.
	\end{align}
	The above non-negativity follows, since $\vf_M$ is the classical solution to \eqref{hjbRS}-\eqref{hjbConsRS} and $\bar u_r \in \mathbb{A}_1$ for all $r$. Hence \eqref{expressvf} and \eqref{verification} imply result (i).
	
	(ii) The right hand side of \eqref{verification} becomes zero by considering $\tilde{u}(t,x,y,v)=u^\star(t,x)$ and this completes the proof of (ii).   \qed

	\noi Finally we show in the following theorem that $\vf_M$ as in Theorem \ref{theo2} indeed gives the optimal performance under all admissible controls.
	
	\begin{Theorem}\label{markov&admissible}
		Let  $\vf_M$ be as in Theorem \emph{\ref{theo2}} and $\vf_{A}:=\inf\{\tilde{J}_\theta^{u,T}(t,x,y,v):u={u(t,\omega)}~ \emph{\text{admissible control}}\}$. Then $\vf_M(t,x,y,v)= \vf_A(t,x,y,v)$.
	\end{Theorem}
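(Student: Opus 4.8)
The plan is to prove the two inequalities $\vf_A \le \vf_M$ and $\vf_A \ge \vf_M$ separately, and to observe that most of the work is already contained in Proposition \ref{verificationth}. The inequality $\vf_A \le \vf_M$ is immediate: the control $\bar{u}^\star = \{u^\star(t,X_t)\}_{t\ge 0}$ is admissible by Proposition \ref{hContinuityRS}(iv), so it lies in the class over which the infimum defining $\vf_A$ is taken, and Proposition \ref{verificationth}(ii) gives $\tilde{J}^{\bar{u}^\star,T}_\theta(t,x,y,v) = \vf_M(t,x,y,v)$. Hence $\vf_A \le \tilde{J}^{\bar{u}^\star,T}_\theta = \vf_M$.

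The substance of the theorem is therefore the reverse inequality $\vf_A \ge \vf_M$, which amounts to extending the verification estimate of Proposition \ref{verificationth}(i) from Markov feedback controls to arbitrary admissible controls. First I would fix an admissible control $u = \{u_t\}_{t\in[0,T]}$ and apply It\^o's formula to $\vf_M(r,X_r,Y_r,V_r^u)$ exactly as in \eqref{itolem}, the only change being that the feedback value $\tilde{u}(r,X_r,Y_r,V_r)$ is replaced everywhere by the realized control value $u_r$. Since $\vf_M \in \mathscr{G}$ is $C^2$ in $v$ and lies in dom$(D_{t,y})$, the formula applies, and the finite-variation part of the decomposition is $\int_t^s \mathscr{A}^{u_r}\vf_M(r,X_r,Y_r,V_r^u)\,dr$, where $\mathscr{A}^{u_r}$ is the generator \eqref{infGen} evaluated with the value $u_r(\omega)\in\mathbb{A}_1$ substituted pointwise in $r$.

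The key point is that $\vf_M$ solves the full HJB equation \eqref{hjbRS}, so $\inf_{u\in\mathbb{A}_1}\mathscr{A}^u\vf_M = 0$, and consequently $\mathscr{A}^{u_r}\vf_M(r,X_r,Y_r,V_r^u)\ge 0$ for every realized value $u_r\in\mathbb{A}_1$. Once the stochastic integral terms are known to be true martingales, taking conditional expectation given $X_t=x,\,Y_t=y,\,V_t^u=v$, letting $s\uparrow T$, and using the terminal condition $\vf_M(T,\cdot)=v^{-\theta/2}$ yields $\tilde{J}^{u,T}_\theta(t,x,y,v)\ge \vf_M(t,x,y,v)$; taking the infimum over all admissible $u$ then gives $\vf_A\ge\vf_M$.

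The main obstacle, and the only place where genuine care is needed, is the martingale property of the three stochastic integral terms in \eqref{itolem} for a general (non-feedback) admissible control. This reduces, as in the proof of Proposition \ref{verificationth}(i), to the square-integrability bound \eqref{bndv} together with the facts that $(1+[u_r^*\eta(z)]_j) > \delta$ for $u_r\in\mathcal{U}_\delta$, that $\|u_r\|$ is essentially bounded by Definition \ref{defi1}(iii), that each $\nu_j$ is finite, and that $\psi$ is bounded. The crucial observation is that \eqref{bndv} follows from Lemma \ref{solstate}(ii), whose negative-order moment bounds hold uniformly over \emph{all} admissible controls rather than only feedback ones; this uniformity is precisely what permits the verification argument to be carried out at the level of arbitrary admissible controls, and with it the lower bound, and hence the equality $\vf_M = \vf_A$, follows.
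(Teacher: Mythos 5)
Your proposal is correct and follows essentially the same route as the paper: the paper's proof likewise observes that the verification argument of Proposition \ref{verificationth}(i) uses only properties (ii) and (iii) of Definition \ref{defi1} (in particular the bound \eqref{bndv} via Lemma \ref{solstate}(ii), which is uniform in $u$) and therefore extends verbatim to arbitrary admissible controls, giving $\vf_M \le \vf_A$, while the reverse inequality comes from admissibility of $\bar{u}^\star$ and Proposition \ref{verificationth}(ii). Your write-up merely makes explicit the It\^o/martingale details that the paper's terse proof invokes by reference.
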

	\proof We first note that in the proof of Proposition \ref{verificationth}(i), we have only used the properties (ii) and (iii) of Definition \ref{defi1} of the Markov control. Since these two properties are true for a generic admissible control $u$, we can get as in Proposition \ref{verificationth}(i). $$\vf_M(t,x,y,v)\leq \tilde{J}^{u,T}_\theta(t,x,y,v)$$ for every admissible control $u$. By taking infimum, we get $\vf_M\leq \vf_A$. The other side of inequality is rather straight forward. Using Proposition \ref{verificationth}(ii) and Theorem \ref{hContinuityRS}(iv), $\bar{u}^\star$ is admissible, and $\vf_M(t,x,y,v)= \tilde{J}^{\bar{u}^\star,T}_\theta(t,x,y,v)$. Thus $\vf_M\geq\vf_A$. Hence the result is proved.\qed
	
	\noi Now we establish a characterisation of $\vf_M$ using the HJB equation in the following Proposition.
	
	\begin{Proposition}\label{comparison}
		Let $\vf$ be any classical solutions to \eqref{hjbRS}-\eqref{hjbConsRS}. Let  $\vf_M$ be as in Theorem \emph{\ref{theo2}}. Then $\vf_M(t,x,y,v)\geq \vf(t,x,y,v)$, for all $t,x,y,v$. Thus the unique solution $\vf_M$ obtained in Theorem \ref{theo2} is maximal among all classical solution to \eqref{hjbRS}-\eqref{hjbConsRS}.
	\end{Proposition}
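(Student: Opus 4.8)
The plan is to show that \emph{every} classical solution $\vf$ of \eqref{hjbRS}-\eqref{hjbConsRS} is a lower bound for the cost functional $\tilde{J}^{u,T}_\theta$ under each admissible control, and then to identify $\vf_M$ with the infimum of these functionals via Theorem \ref{markov&admissible}. The key observation is that the inequality in the verification computation of Proposition \ref{verificationth}(i) uses the HJB equation only through $\mathscr{A}^{u}\vf\geq\inf_{u'\in\mathbb{A}_1}\mathscr{A}^{u'}\vf=0$, which holds for every $u\in\mathbb{A}_1$ and every classical solution, not merely for $\vf_M$.

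First I would fix an arbitrary admissible control $u=\{u_r\}$ and apply It\^o's formula to $\vf$ along $\{(r,X_r,Y_r,V^u_r)\}_{r\in[t,T]}$, exactly as in \eqref{itolem}. Since $\vf\in\mathscr{G}$ (Definition \ref{defcl}), the growth controls built into $\mathscr{G}$ replace the explicit form $v^{-\frac{\theta}{2}}\psi$ used in Proposition \ref{verificationth}: property (ii), $\vf(t,x,y,\cdot)\in\mathcal{V}$, gives $|\vf(t,x,y,v)|\leq Cv^{-\frac{\theta}{2}}$, and property (iii), $v\,\partial_v\vf\in\mathcal{V}$, gives $|v\,\partial_v\vf|\leq C'v^{-\frac{\theta}{2}}$. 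Using these bounds together with admissibility (Definition \ref{defi1}(iii)), the lower bound $1+[u^{*}\eta(z)]_j\geq\delta$, the finiteness of each $\nu_j$, and continuity of $\si$, each of the three stochastic integrals in the analogue of \eqref{itolem} has integrand dominated by a constant multiple of $(V^u_{r-})^{-\frac{\theta}{2}}$. Hence all three are genuine $\{\mathscr{F}_s\}$ martingales as soon as the square-integrability condition \eqref{bndv} holds, and \eqref{bndv} follows from Lemma \ref{solstate}(ii) and Tonelli's theorem precisely as before.

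Taking conditional expectation given $X_t=x,Y_t=y,V^u_t=v$ and letting $s\uparrow T$ then yields, as in \eqref{verification},
\begin{align*}
\tilde{J}^{u,T}_\theta(t,x,y,v)-\vf(t,x,y,v)=\mathbb{E}\int_t^T\mathscr{A}^{u}\vf(r,X_r,Y_r,V^u_r)\,dr\geq 0,
\end{align*}
the non-negativity following from $\mathscr{A}^{u_r}\vf\geq 0$ since $u_r\in\mathbb{A}_1$. Thus $\vf(t,x,y,v)\leq\tilde{J}^{u,T}_\theta(t,x,y,v)$ for every admissible $u$. Taking the infimum over all admissible controls and invoking Theorem \ref{markov&admissible}, which identifies $\vf_M$ with $\vf_A=\inf_u\tilde{J}^{u,T}_\theta$, gives $\vf(t,x,y,v)\leq\vf_M(t,x,y,v)$; equivalently one may simply take $u=\bar{u}^\star$ and use Proposition \ref{verificationth}(ii) to conclude $\vf\leq\tilde{J}^{\bar{u}^\star,T}_\theta=\vf_M$. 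Since $\vf_M$ is itself a classical solution, it is maximal among them, as claimed.

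The main obstacle I anticipate is not the comparison itself, which is the ``easy half'' of a verification argument, but the careful justification that the stochastic integrals are true martingales for a \emph{general} $\vf\in\mathscr{G}$ rather than for the specific separated form $v^{-\frac{\theta}{2}}\psi(t,x,y)$. This is exactly where the growth hypotheses encoded in $\mathcal{V}$ and in Definition \ref{defiG}(ii)--(iii) are essential; without them one obtains only local martingales and the sign of the expectation cannot be controlled. One should also verify that the $\mathcal{V}$-bounds on the regime-jump increments $\vf(r,R^l_jx,R^l_0y,v)-\vf(r,x,y,v)$ are uniform over the finite state space and the relevant compact range of $(t,y)$, which follows from continuity on $\mathscr{D}$ together with the finiteness of $\mathcal{X}^{n+1}$.
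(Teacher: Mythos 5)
Your proposal is correct and takes essentially the same approach as the paper: the paper's proof likewise observes that the martingale argument of Proposition \ref{verificationth}(i) uses only conditions (i)--(iii) of Definition \ref{defiG} rather than the separated form $v^{-\frac{\theta}{2}}\psi$, applies It\^o's formula to the arbitrary classical solution $\vf$ along the optimal control $\bar{u}^\star$, and concludes $\vf\leq\tilde{J}^{\bar{u}^\star,T}_\theta=\vf_M$ from the nonnegativity of $\mathscr{A}^{u^\star}\vf$ and Proposition \ref{verificationth}(ii). The only cosmetic difference is that your primary route first establishes $\vf\leq\tilde{J}^{u,T}_\theta$ for every admissible $u$ and then invokes Theorem \ref{markov&admissible}, but you yourself note the direct shortcut through $\bar{u}^\star$, which is exactly the paper's argument.
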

	\proof Note that in the Proof of Proposition \ref{verificationth}(i), to show that the right hand side of \eqref{itolem} is a martingale, we have only effectively used the fact that $\vf_M$ satisfies  conditions (i),(ii) and (iii) of Definition \ref{defiG}. Hence for any $\vf \in \mathscr{G}$ and $\bar{u}^\star$ as in  Proposition \ref{hContinuityRS}(iv),
	\begin{align}\label{newitolem}
	\vf(s,X_s,Y_s,V_s^{\bar{u}^\star})-
	\vf(t,X_t,Y_t,V_t^{\bar{u}^\star})- \int_t^s\ [\mathscr{A}^{u^\star}\vf(r, X_r,Y_r,V_r^{\bar{u}^\star})] dr,
	\end{align}
	is an $\{\mathscr{F}_s\}_{s \ge0}$ martingale. Taking conditional expectation in \eqref{newitolem}, given  $X_t=x,Y_t=y,V^{\bar{u}^\star}_t=v$ and letting $s\uparrow T$, we have
	\begin{align*}
	&\mathbb{E}\left[(V^{\bar{u}^\star}_T)^{-\frac{\theta}{2}}|X_t=x,Y_t=y,V^{\bar{u}^\star}_t=v\right]-\vf(t,x,y,v)\nonumber\\
	&=\mathbb{E}\int_t^T\bigg[\mathscr{A}^{u^\star}\vf(r,
	X_r,Y_r,V_r^u)\bigg|X_t=x,Y_t=y,V^{\bar{u}^\star}_t=v\bigg]dr,
	\end{align*}
	using $\vf(T,X_T,Y_T,V^{\bar{u}^\star}_T)=\left(V^{\bar{u}^\star}_T\right)^{-\frac{\theta}{2}}$. Now using nonnegativity of right side and Proposition \ref{verificationth}(ii), we obtain $\vf_M(t,x,y,v)\geq \vf((t,x,y,v)$.\qed

	\section{Numerical Example} \label{sec:NumSim}
	We have seen that the optimal portfolio value with risk sensitive criterion is given by (\ref{IEFinalRS}) and \eqref{eq:hjbPdeRS} - \eqref{eq:hjbPdeTerRS}. For illustration purpose, we are considering a simple model in which all the parameters for all assets are governed by a single semi-Markov process. Then $h_\theta(t,x)= h_\theta(t,\bar{x})$ if $ \bar{x}^0=x^0$, and we denote that value as  $\bar{h}_\theta(t,x^0)$ where $x^0$ and $y^0$ are the first components of $x$, and $y$ respectively. Hence \eqref{eq:psi} implies $\psi(t,x,y) = \psi(t,\bar{x},\bar{y})$ provided $\bar{x}^0=x^0$ and $\bar{y}^0=y^0$. In other words $ \psi(t,x,y)$ depends only on $(t,x^0,y^0)$. In view of this, we may introduce a new function $\bar{\psi} (t,x^0,y^0)$ to denote $\psi (t,(x^0,\ldots, x^n),(y^0,\ldots, y^n))$. Therefore \eqref{eq:hjbPdeRS} gets reduced to
	\begin{align}\label{eq1}
	D_{t,y}\psi(t,x,y)+\ds\sum_{j\neq x}\la^0_{xj}(y)\Big[\psi(t,j,0)-\psi(t,x,y)\Big]+\bar{h}_\theta(t,x)\psi(t,x,y)=0,
	\end{align}
	for every $x\in \mathcal{X}$, $y\in (0,t)$, $t\in (0,T)$. We further assume that $n=1$, i.e., the portfolio includes a single stock and a money market instrument. We also specify the state space $\mathcal{X}=\{1,2,3\}$, i.e., the semi-Markov process has three regimes. The drift coefficient, volatility and instantaneous interest rate at each regime are chosen as follows:
	\[\left(\mu(i),\sigma(i),r(i)\right) = \left\{
	\begin{array}{lr}
	(0.3,0.2,0.2) & : i=1\\
	(0.6,0.4,0.5) & : i=2\\
	(0.8,0.3,0.7) & : i=3.
	\end{array}
	\right.
	\]
	The transition rates for $i\neq j$ are assumed to be given by
	$$\la^0_{ij}(y)= (y-\ln(1+y))p_{ij}
	$$
	where
	\[ (p_{ij})_{ij} = \left( \begin{array}{ccc}
	0 & 2/3 & 1/3 \\
	1/2 & 0 & 1/2 \\
	1/3 & 2/3 & 0 \end{array} \right).\]
	\noi Hence the holding time of the first component in each regime has the conditional probability density function $f(y \mid i) = y\exp(-y)$ and the conditional c.d.f $f(y \mid i)= 1-(1+y)e^{-y}$.
	We also assumed $\eta(z) = z$ and $\nu(dz) := \frac{1_{[a,b]}(z)}{b-a} dz$.
	
	It is shown separately in \cite{GGS09} that the classical solution to \eqref{eq1} with $\bar{\psi}_\theta(T,x,y)=1$, satisfies the following integral equation
	\begin{align}\label{eq:IERS}
	\bar{\psi}(t,x,y) & = \frac{1 - F(T-t+y \mid x)}{1 - F(y \mid x)}\exp \left[\Int_{t}^{T}\bar h_\theta(s,x)\,ds\right] + \Int_{0}^{T-t}\exp\left[\Int_{t}^{t+r}\bar h_\theta(s,x)\,ds\right]\times  \nonumber \\
	& \quad\ds\sum_{j \neq x} p_{xj}(y+r)\bar{\psi}(t+r,j,0)\frac{f(y+r \mid x)}{1-F(y \mid x)}dr,
	\end{align}
	which also follows from \eqref{intrisk}. Here we compute $\bar{\psi}_\theta(t,x,y)$ by discretization of above integral equation using an implicit step-by-step quadrature method as developed in \cite{GGS09}. We take $T =1$, $\Delta t= 0.002$ so $m = 0,1,2,\ldots, M=\lfloor\frac{T}{\Delta t}\rfloor$. The discretization is given by
	\begin{align*}
	\psi^m(i,y) & \approx \bar \psi(T-m\Delta t,i,y).
	\end{align*}
	Therefore from (\ref{eq:IERS}) we get
	\begin{align}\label{eq:psiTildeIter}
	\psi^m(i,y) =& \frac{1 - F(m\Delta t+y \mid i)}{1 - F(y \mid i)}\exp\left[H_\theta^0(i) - H_\theta^m(i)\right]  + \Delta t \ds\sum_{l=0}^{m}w_m(l)\nonumber \\
	& \frac{f(y+l\Delta t \mid i)}{1-F(y \mid i)} \left(\exp\left[H_\theta^{m-l}(i) - H_\theta^m(i)\right] \ds\sum_{j \in\mathcal{X}, j \ne i}p_{ij}\psi^{m-l}(j,0)\right),
	\end{align}
	where $w_m(l)$ are weights, chosen as below
	\begin{equation}
	w_m(l) = 1 ~\textrm{for} ~l=1, 2, \ldots, m-1, ~~ w_m(0) = w_m(m) = \frac{1}{2},\nonumber
	\end{equation}
	and
	\begin{align*}
	\nonumber H_\theta^m(i) & ~:=~ \Int_{0}^{T-m\Delta t} \bar h_\theta(s,i)\,ds, \\
	\bar h_\theta(t,i) & = \inf_{u \in \mathbb{A}_1}\left[-\frac{\theta}{2}\left[r(t,i) + b(t,i)~u\right]+ \frac{1}{2}  \left(-\frac{\theta}{2}\right) \left(-\frac{\theta}{2}-1\right)u^2\sigma^2(t,i) \right.\nonumber \\
	& \quad \quad \quad \left.  -1+ \frac{(1+bu)^{1-\frac{\theta}{2}} - (1+au)^{1-\frac{\theta}{2}}}{u(1-\frac{\theta}{2})(b-a)} \right].
	\end{align*}
	\noi For a given initial portfolio value $v$, from \eqref{IEFinalRS} and \eqref{eq:psiTildeIter} we get
	\begin{align}\label{eq:phiTildeIter}
	\tilde\vf_\theta(v,i,y) \approx \ln(v) -  \frac{2}{\theta}\ln(\psi^M(i,y)).
	\end{align}
	Thus the numerical approximation of risk sensitive optimal wealth is given by  \eqref{eq:psiTildeIter}-\eqref{eq:phiTildeIter}.
	
	In Proposition \ref{hContinuityRS} we have seen that there exists a unique $u \in \mathbb{A}_1$ which gives $h_\theta(t,i)$ and that we can find by using any convex optimization technique. Here we have used the interior-point method to find the optimal $u$.
	
	\begin{figure}[h]
		\centering
		\begin{subfigure}{.49\textwidth}
			\centering
			\includegraphics[width=7cm]{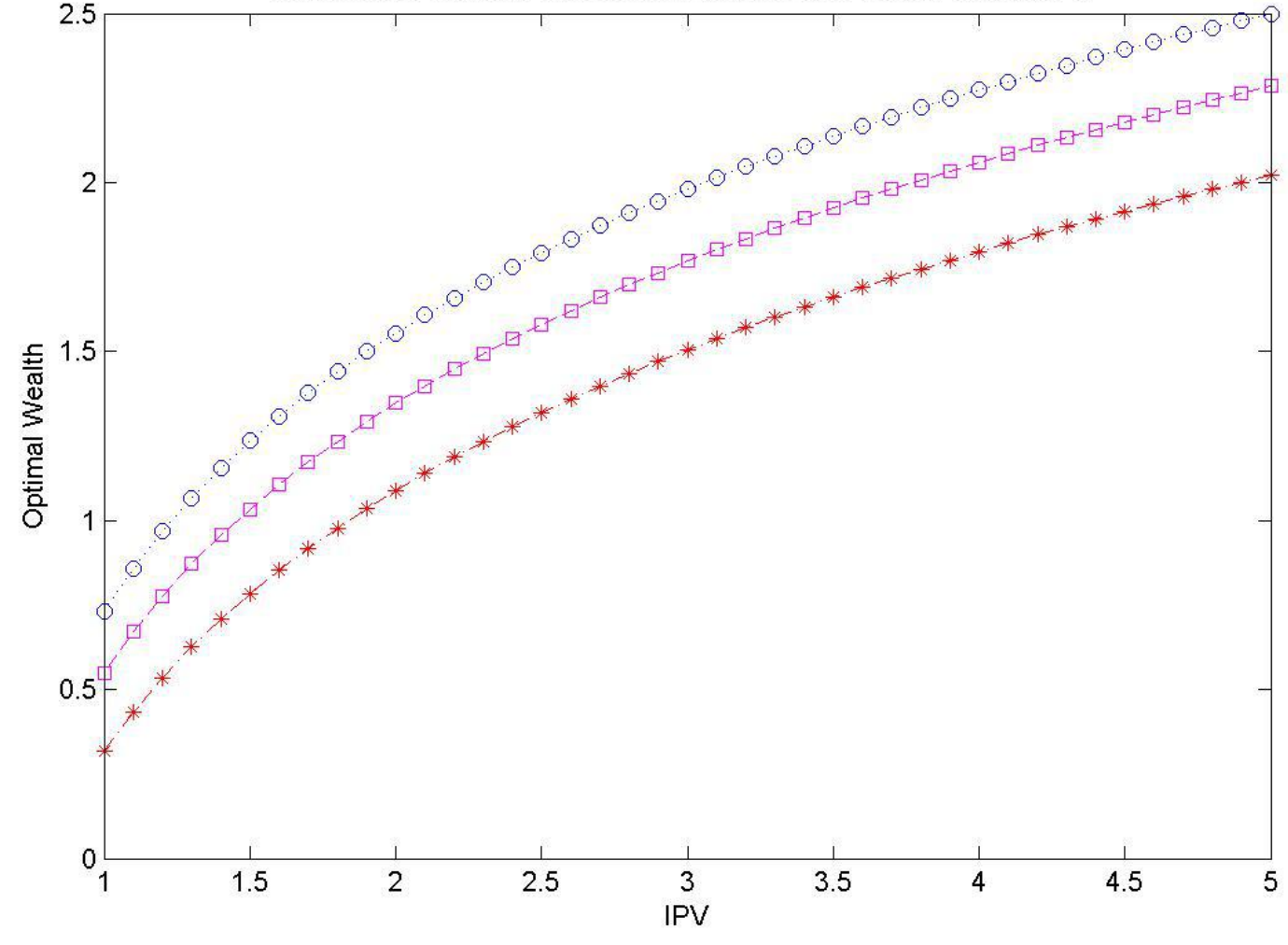}
		\end{subfigure}
		\begin{subfigure}{.49\textwidth}
			\centering
			\includegraphics[width=7cm]{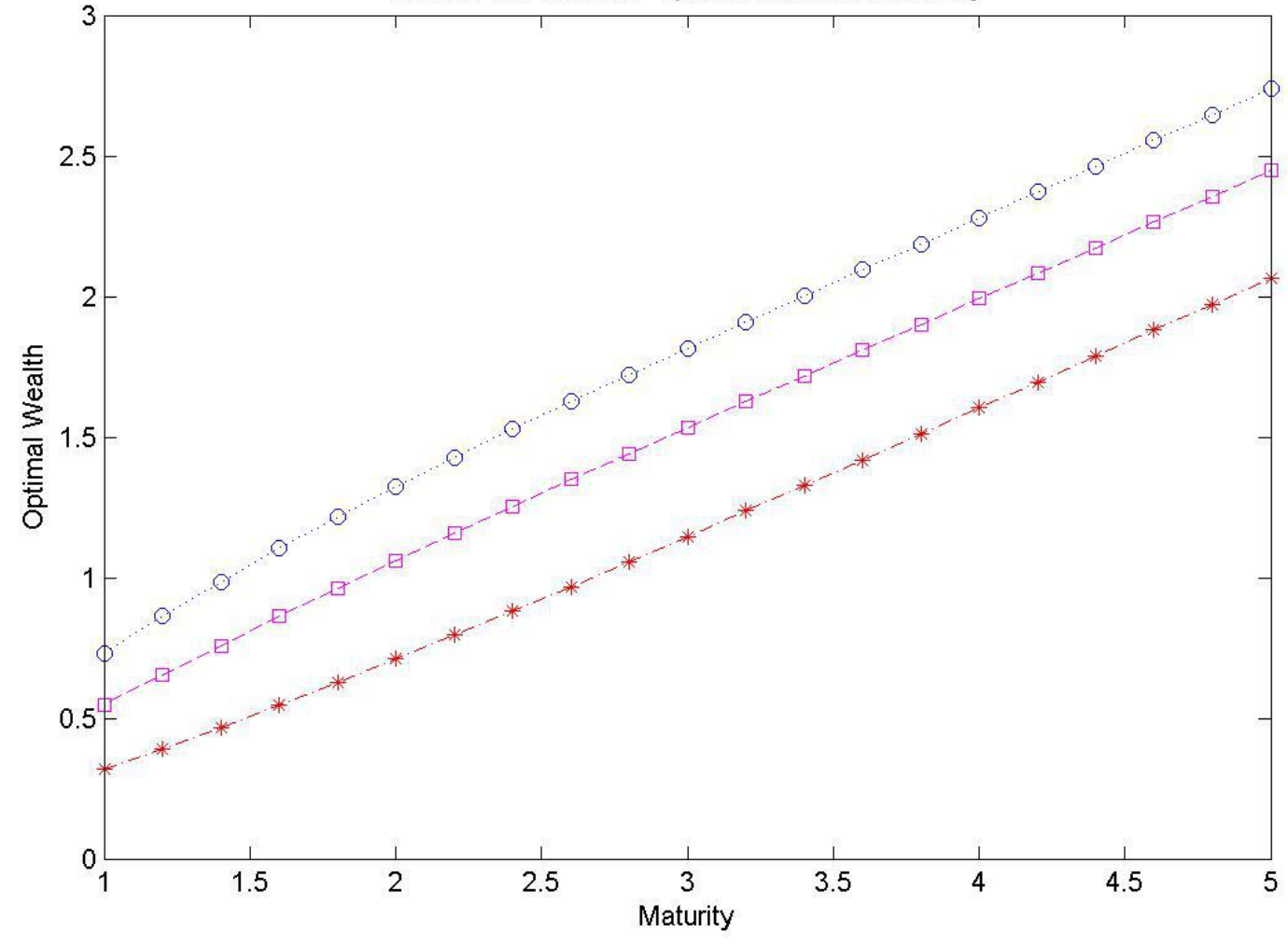}
		\end{subfigure}
		\caption{Finite horizon risk sensitive optimal wealth function}\label{fig1}
	\end{figure}
	
	\begin{figure}[h]
		\centering
		\includegraphics[width=10cm]{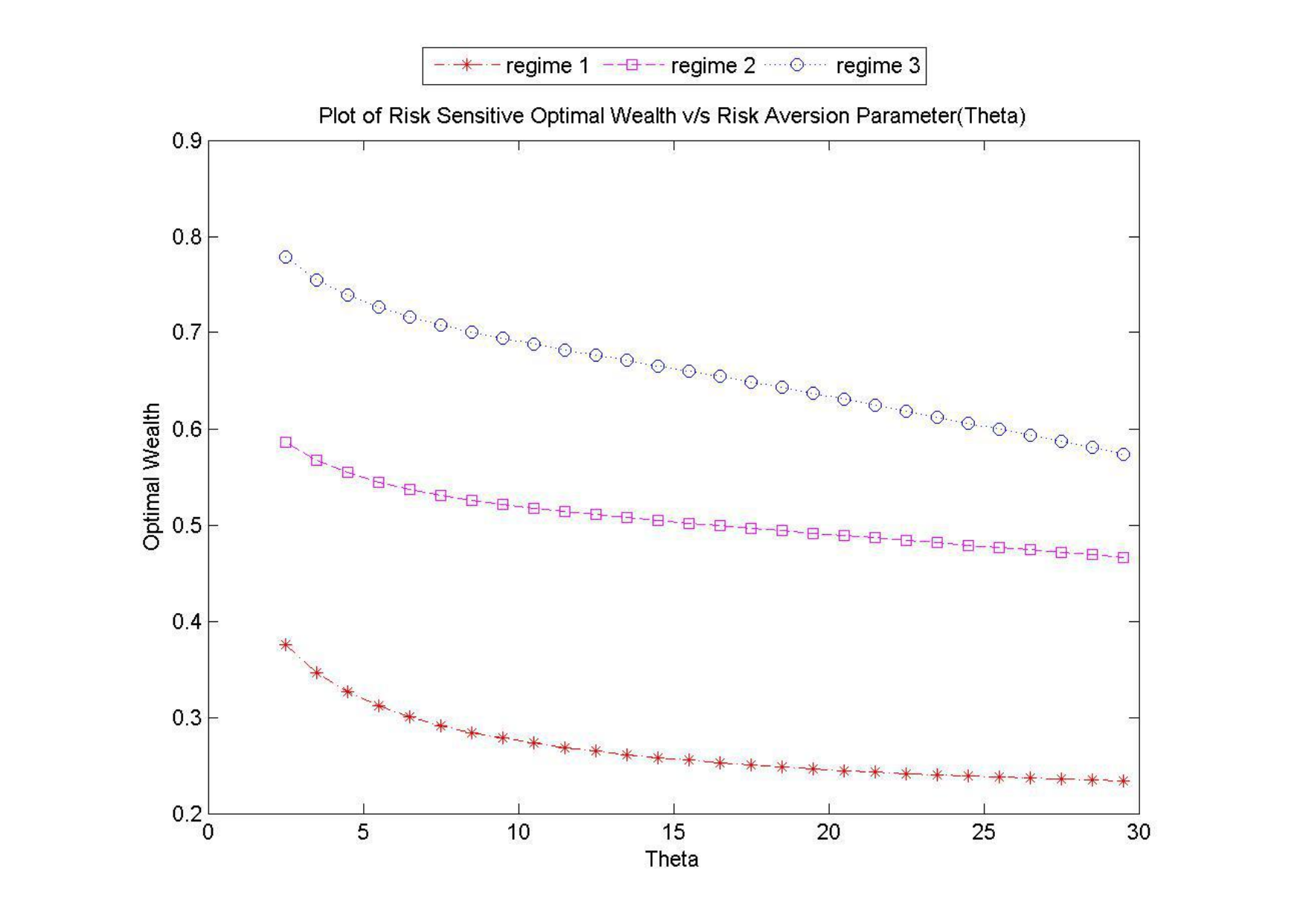}
		\caption{ Optimal wealth function vs risk aversion parameter}\label{fig:fig2RS}
	\end{figure}
	
	\noi We use above mentioned numerical scheme to compute the risk sensitive optimal wealth function given in (\ref{eq:phiTildeIter}). In all $3$ figures each line corresponds to a particular value of $i$. To be more precise the cross line corresponds to $i=1$, whereas the box and circle lines are for $i=2$ and $i=3$. Figure \ref{fig1} describes the behavior of risk sensitive optimal wealth for different values of initial portfolio wealth and maturity. The left side plot in figure \ref{fig1} shows that the optimal wealth is increasing and concave with the value of initial investment. This is due to the concavity of our objective function. On the other hand the right side plot shows linearity of the optimal wealth with respect to the the maturity of investment. We also note a strict hierarchy  of optimal wealth according to the market parameter values at different regimes. However a detailed analysis based on series of numerical experiments may reveal some finer sensitivity results. We refrain to discuss those in this paper.  Figure \ref{fig:fig2RS} shows the movement in risk sensitive optimal wealth for different values of risk aversion parameter. The plot shows the strict diminishing behavior of risk sensitive optimal wealth for increasing risk aversion parameter value. This observation is consistent to the common sense ``no risk, no gain''.	
	
	\section{Conclusion} \label{sec:Conc}
	In this paper a portfolio optimization problem, without any consumption and transaction cost, where stock prices are modelled by multi dimensional geometric jump diffusion market model with semi-Markov modulated coefficients is studied. We find the expression of optimal wealth for expected terminal utility method with risk sensitive criterion on finite time horizon. We have studied the existence of classical solution of HJB equation using a probabilistic approach. We have obtained the implicit expression of optimal portfolio. It is important to note that, the control is robust in the sense that the optimal control does not depend on the transition function of the regime. We have also implemented a numerical scheme to see the behavior of solutions with respect to initial portfolio value, maturity and risk of aversion parameter. The results of the numerical scheme are in agreement with the theory of financial market. The corresponding problem in infinite horizon is needs further investigation. This would  require appropriate results on large deviation principle for semi-Markov processes which need to be carried out.

	\appendix
	\section{Proof of Lemmata}
		\begin{Lemma}\label{expeta1}
			Let $\bar{N}$ be a Poisson random measure on $[0, \infty)\times \mathbb{R}$ defined on the probability space $(\Omega, \mathscr{F}, P)$ with intensity $dt\bar{\nu}(dz)$, where $\bar{\nu}$ is a finite measure. If $\bar{\eta}\in L^1(\nu)$, then there exists a positive constant $c$ such that
			\begin{align*}	\mathbb{E}\left[ \exp\left(\Int_{0}^{t  }\! \Int_{\mathbb{R}}\ln(1+ \bar\eta(z))\,\bar N(ds, dz)\right)\right]=\exp\left(ct\bar \nu(\RR)\right).
			\end{align*}
			
		\end{Lemma}
		\proof We first note that $|\bar{N}_t|:=\bar{N}([0,t]\times\mathbb{R})$ is finite a.s. as $|\bar{\nu}|<\infty$. Therefore the integral $\int_{0}^{t}\int_{\mathbb{R}}\ln(1+ \bar\eta(z))\,\bar N(ds, dz)$ can be written as $\sum_{i=1}^{|\bar{N}_t|}\ln(1+ \bar\eta(z_i))$, where $\{(t_i,z_i)\mid i=1,\ldots, |\bar{N}_t|\}$ are the point masses of $\bar{N}$ on $[0,t]\times\mathbb{R}$. To be more precise,  $\bar N(A)=\sum_{i=1}^{|\bar{N}_t|}\delta_{\{(t_i,z_i)\}}(A)$ for all $A \in\mathscr{B}([0,t]\times\mathbb{R})$. Therefore
		\begin{align}\label{inteta1}
		\no \mathbb{E}\left[ \exp\left(\Int_{0}^{t  }\! \Int_{\mathbb{R}}\ln(1+ \bar\eta(z))\,\bar N(ds, dz)\right)\right]&=\mathbb{E}\left[\prod_{i=1}^{|\bar{N}_t| }(1+ \bar\eta(z_i))\right]\\
		&= \mathbb{E}\left[\mathbb{E}\left[\prod_{i=1}^{|\bar{N}_t| }(1+ \bar\eta(z_i))\bigg||\bar{N}_t|\bigg|\right]\right].
		\end{align}
		Since $(1+\bar\eta(z_1)),\ldots,(1+\bar\eta(z_{|\bar N_t|}))$ are conditionally independent and identically distributed given $|\bar N_t|=n$, the right side is equal to
		\begin{align*}
		\ds\ds\sum_{n=1}^{\infty}\mathbb{E}[(1+\bar \eta(z_1))]^nP(|\bar N_t|=n).
		\end{align*}
		Now using $\mathbb{E}\left[\bar \eta(z_1)\right]=c<\infty$, and $P(|\bar{N}_t|=n)=e^{-t\bar \nu(\RR)}\frac{(t\bar\nu(\RR))^n}{n!}$, the above sum is equal to
		\begin{eqnarray*}
		\lefteqn{\sum_{n=1}^{\infty}(1+c)^ne^{-t\bar \nu(\RR)}\frac{(t\bar\nu(\RR))^n}{n!}}\\
		&=&e^{-t\bar \nu(\RR)}\exp\left(t\bar \nu(\RR)(1+c)\right)\\
		&=&\exp\left(ct\bar \nu(\RR)\right).
		\end{eqnarray*} Hence the proof. \qed
	\begin{proof}[Proof of Lemma \ref{SDESol}]
		First we show the uniqueness by assuming that the SDE
		(\ref{eq:sde}) admits  a solution, $\{S_t^l\}_{t\geq 0}$, say, the stopping time  $\tau =\min\{t \in [0,\infty) \mid S_t^l \leq 0\}$. Using It\^ o  Lemma (Theorem 1.16 of \cite{OkSu}) for $ 0\leq s < t \wedge \tau$ we get,
		\begin{align*}
		d\ln(S_s^l) & = \frac{S_{s-}^l}{S_{s-}^l}\left[\mu^l(s, X_{s-} )ds + \ds\sum_{j=1}^{m_1} \sigma_{lj}(s, X_{s-}) ~dW_s^j \right] -\frac{1}{2}(S_{s-}^l)^{-2}(S_{s-}^l)^2 a_{ll}(s,X_{s-})ds \\
		& \quad + \ds\sum_{j=1}^{m_2}\Int_{\mathbb{R}}\! \left[\ln(S_{s-}^l + \eta_{lj}(z_{j})S_{s-}^l)
		-\ln(S_{s-}^l)\,  \right]\, N_j(ds, dz_j).
		\end{align*}
		Integrating both sides from 0 to $t \wedge\tau$ yields,
		\begin{align*}
		\ln\left(S^l_{t\wedge\tau}\right)-\ln s_l & =\Int_0^{t\wedge\tau}\left(\mu^l(s, X_{s-}) - \frac{1}{2} a_{ll}(s,X_{s-}) ds\right)+ \ds\sum_{j=1}^{m_1} \Int_{0}^{t \wedge\tau}\! \sigma_{lj}(s, X_{s-})\,dW_s^j \nonumber \\
		& + \ds\sum_{j=1}^{m_2}\Int_{0}^{t \wedge\tau}\! \Int_{\mathbb{R}}\ln(1+ \eta_{lj}(z_j))\,N_j(ds, dz_j),
		\end{align*}
		where all the integrals have finite expectations almost surely by using \textbf{(A2)}.
		\begin{align}
		S_{t \wedge\tau}^l & = s_l\exp\Bigg[\Int_{0}^{t \wedge\tau} \!\left(\mu^l(s, X_{s-}) - \frac{1}{2} a_{ll}(s,X_{s-})\right)ds+ \ds\sum_{j=1}^{m_1} \Int_{0}^{t \wedge\tau}\! \sigma_{lj}(s, X_{s-})\,dW_s^j \nonumber\\
		& +\ds\sum_{j=1}^{m_2}\Int_{0}^{t \wedge\tau}\! \Int_{\mathbb{R}}\ln(1+ \eta_{lj}(z_j))\,N_j(ds, dz_j)\Bigg]
		\end{align}
		Thus any solution to (\ref{eq:sde}) has the above expression. Under \textbf{(A2)}, $\Int_{0}^{\tau} \Int_{\mathbb{R}}\ln(1+ \eta_{lj}(z_j))\,N_j(ds, dz_j)$ has finite expectation for any finite stopping time $\tau$.
		
		Let  $\Omega_1:=\{\omega \in \Omega: \tau(\omega)<\infty\}$. Now
		if possible, assume $P(\Omega_1)>0$. By letting $t \rightarrow
		\infty $ in the above expression, we obtain that
		$S_{\tau(\omega)-}^l$ is exponential of a random variable which is
		finite for almost every $\omega \in \Omega_1$. Thus
		$S_{\tau(\omega)-}^l > 0$. But for almost every $\omega\in
		\Omega_1$ $S_{\tau(\omega)}^l\le 0$. Hence non-positivity occurred
		only by jump. In other words $\eta_{lj}(z_j)\le -1$ for some
		$z_j$. But that is contrary to the assumption on $\eta$. Hence
		$\tau = \infty ~ P~a.s$. Therefore, $S_t^l > 0~
		P$ a.s. for all $t \in (0,\infty)$ and is given by
		\begin{align}\label{eq:sdeSol}
		S_{t}^l & = S_0^l\exp\left[\ds\sum_{j=1}^{m_1} \Int_{0}^{t  }\! \sigma_{lj}(s, X_{s-})\,dW_s^j+
		\ds\sum_{j=1}^{m_2}\Int_{0}^{t  }\! \Int_{\mathbb{R}}\ln(1+ \eta_{lj}(z_j))\,\bar{N}_j(ds, dz_j)   \right.\nonumber \\
		& \quad  \left.+\Int_{0}^{t  } \!\Bigg(\mu^l(s, X_{s-}) - \frac{1}{2} (\sigma_l(s,X_{s-}) \sigma_l(s,X_{s-})^{*})\right. \nonumber \\
		& \quad \left.  + \ds\sum_{j=1}^{m_2}\! \Int_{|z_j|<1}\!\left(\ln(1+ \eta_{lj}(z_j)) -\eta_{lj}(z_j)\right) \,\nu_j(dz_j)\Bigg)\,ds  \right].
		\end{align}
		Thus by equation (\ref{eq:sdeSol}), $S^l=\{S_t^l\}_{t \geq 0}$ is an adapted and rcll process and is
		uniquely determined with the initial condition $S_0^l = s_0$. Hence the solution is unique.
		
		\noi It is easy to show by a direct calculation that the process
		$S^l$, given by \eqref{eq:sdeSol} indeed solves the SDE
		\eqref{eq:sde}.
	\end{proof}
	\proof[Proof of Lemma \ref{theo1}]
	\noi (i) One can compute the conditional c.d.f
	$F_{\tau^l}(\cdot|i,\bar{y})$ in the following way
	\begin{eqnarray}\label{eqf}
	\no F_{\tau^l}(s|i,\bar{y})&=&P(0\leq\tau^l(t)\leq s|X^l_t=i,Y^l_t=\bar{y})\\
	\no &=& P(\tau^l(t)+Y^l_t\leq s+\bar{y}|X^l_t=i,Y^l_t=\bar{y})\\
	\no &=& P(Y^l_{T^l_{n^l(t)+1}-}\leq s+\bar{y}|Y^l_{T^l_{n^l(t)}-}\geq \bar{y},X^l_t=i,Y^l_t=\bar{y})\\
	& = & \frac{F^l(s+\bar{y}|i)-F^l(\bar{y}|i)}{1-F^l(\bar{y}|i)}\quad\quad l=0,1,\ldots,n .
	\end{eqnarray}
	We also denote the derivative of $F_{\tau^l}(s|i,\bar{y})$ by $f_{\tau^l}(s|i,\bar{y})$, given by
	\begin{eqnarray}\label{eqft}
	f_{\tau^l}(\cdot|i,\bar{y})=\frac{f^l(\cdot+\bar{y}|i)}{1-F^l(\bar{y}|i)}.
	\end{eqnarray}
	From the definition of $F_{\tau^l|l}(v|x,y)$ we have,
	\begin{eqnarray}\label{eq:F}
	\no F_{\tau^l|l}(v|x,y)&=&P_{t,x,y}(\tau^l(t)\leq v|\ell(t)=l)\\
	& = & \frac{P_{t,x,y}(\tau^l(t)\leq v,\ell(t)=l)}{P_{t,x,y}(\ell(t)=l)}.
	\end{eqnarray}
	We also introduce a new variable $\tau^{-l}(t):=\min_{m\neq l}\tau^m(t)$.
	We denote the conditional c.d.f of $\tau^{-l}(t)$ given $X_t=x$ and $Y_t=y$
	as $F_{\tau^{-l}}(\cdot|x,y)$ which is equal to $1-\prod_{m\neq l} (1-F_{\tau^m}(\cdot|x^m,y^m))$.
	
	It is easy to see that $P_{t,x,y}(\tau^l(t)\leq v,\ell(t)=l)=P_{t,x,y}(\tau^{-l}(t)>\tau^l(t),\tau^l(t)\leq v)$.
	To compute this probability we use a conditioning on $\tau^l(t)$. Thus
	\begin{eqnarray}\label{ptxytault}
	\no  P_{t,x,y}(\tau^l(t)\leq v,\ell(t)=l)&=& \mathbb{E}_{t,x,y}[P_{t,x,y}(\tau^{-l}(t)>\tau^l(t),\tau^l(t)\leq v|\tau^l(t))]\\
	\no & = & \Int_0^v P_{t,x,y}(\tau^{-l}(t)>\tau^l(t)|\tau^l(t)=s) f_{\tau^l}(s|x^l,y^l)ds\\
	\no & = & \Int_0^v(1-P_{t,x,y}(\tau^{-l}(t)\leq s))f_{\tau^l}(s|x^l,y^l)ds\\
	& = & \Int_0^v\prod_{m\neq l}(1-F_{\tau^m}(s|x^m,y^m))f_{\tau^l}(s|x^l,y^l)ds.
	\end{eqnarray}
	Again, $P_{t,x,y}(\ell(t)=l)=P_{t,x,y}(\tau^l(t)\leq \infty,\ell(t)=l)$ and from \eqref{eqf}, \eqref{eqft} we have (i).

	\noi (ii) From \eqref{eq:F}, one gets \eqref{Ftaullv}. Since $\lambda^l$ is $C^1$ in $s$, $\prod_{m\neq l}(1-F^m(s+y^m|x^m))f^l(s+y^l|x^l)$ is $C^1$ in $s\in [0,T]$. Thus
	by fundamental theorem of calculus, $F_{\tau^l|l}(v|x,y)$ is twice differentiable wrt $v$.
	
	\noi (iii) Follows directly from (ii).
	
	\noi (iv) In order to show that $P_{t,x,y}(\ell(t)=l)$ and $F_{\tau^l|l}(T-t|x,y)$ belong to $D_{t,y}$ we introduce a new function
	$\digamma^l_v(x,y):=\int_0^v\prod_{m\neq l}(1-F^m(s+y^m|x^m))f^l(s+y^l|x^l)ds$ and $\digamma^l_\infty(x,y):=\displaystyle\lim_{v\rightarrow\infty}\digamma^l_v
	(x,y)$. Consider another function
	\begin{equation}\label{digprim}
	\digamma^{l'}_v(x,y):=\prod_{m\neq l}(1-F^m(v+y^m|x^m))f^l(v+y^l|x^l).
	\end{equation}
	We note that $\digamma^{l'}_v(x,y)$ is the  derivative of $\digamma^l_v(x,y)$ with respect to $v$ and it is continuous. Now we show that $\digamma^l_v(x,y)$ is $C^1$ in $y$. To this end we first show the existence of the following limit
	\begin{eqnarray*}
		\lim_{\vp\to 0}&\frac{1}{\vp}\Big  [\int_{0}^{v}\prod_{m\neq l}(1-F^m(s+y^m+\vp|x^m))f^l(s+y^l+\vp|x^l)ds\\
		&-\int_{0}^{v}\prod_{m\neq l}(1-F^m(s+y^m|x^m))f^l(s+y^l|x^l)ds\Big].
	\end{eqnarray*}
	By a suitable substitution of variable, the expression in the above limit is
	\begin{eqnarray*}
		&\frac{1}{\vp}\Big  [\int_{v}^{v+\vp}\prod_{m\neq l}(1-F^m(s+y^m|x^m))f^l(s+y^l+\vp|x^l)ds\\
		&-\int_{0}^{\vp}\prod_{m\neq l}(1-F^m(s+y^m|x^m))f^l(s+y^l|x^l)ds\Big].
	\end{eqnarray*}
	Using \eqref{digprim} the above expression converges to $\digamma^{l'}_v(x,y)-\digamma^{l'}_0(x,y)$ as $\vp\rightarrow 0$ and the limit is continuous in $y$. Thus $$D_{t,y}\digamma^l_v(x,y)=\digamma^{l'}_v(x,y)-\digamma^{l'}_0(x,y).$$ If $v$ is a differentiable function of $t$, then $$D_{t,y}\digamma^l_v(x,y)=\digamma^{l'}_v(x,y)\left(1+\frac{\partial v}{\partial t}\right)-\digamma^{l'}_0(x,y).$$
	Hence
	\begin{equation}\label{eq2}
	D_{t,y}\digamma_v^l(x,y)=
	\begin{cases}
	\digamma^{l'}_v(x,y)\left(1+\frac{\partial v}{\partial t}\right)-\digamma^{l'}_0(x,y) & 0<v<\infty \\
	-\digamma^{l'}_0(x,y) & v=\infty.
	\end{cases}
	\end{equation}
	Since $$D_{t,y}\prod_m(1-F^m(v+y^m|x^m))=-\ds\sum_rf^r(y^r|x^r)\prod_{m\neq r}(1-F^m(y^m|x^m)$$
	it follows from Lemma \ref{theo1} (i), (ii) and the above notations  $P_{t,x,y}(\ell(t)=l)=\frac{\digamma^l_\infty(x,y)}{\prod_m(1-F^m(y^m|x^m))}$ and $F_{\tau^l|l}(T-t|x,y)=\frac{\digamma^l_{T-t}(x,y)}{\digamma^l_\infty(x,y)}$. Hence $P_{t,x,y}(\ell(t)=l)$ and $F_{\tau^l|l}(T-t|x,y)$ are in the dom($D_{t,y}$). Now operating $D_{t,y}$ on $P_{t,x,y}(\ell(t)=l)$  and using \eqref{eqft}, \eqref{digprim} we have
	\begin{eqnarray*}
		D_{t,y}P_{t,x,y}(\ell(t)=l)&=&\frac{D_{t,y}\digamma_\infty(x,y)}{\prod_m(1-F^m(v+y^m|x^m))}\\
		&&+\frac{\digamma_\infty(x,y)\times\ds\sum_rf^r(y^r|x^r)\prod_{m\neq r}(1-F^m(y^m|x^m))}{(\prod_m(1-F^m(v+y^m|x^m)))^2}\\
		&=&-\frac{\digamma'_0(x,y)}{\prod_m(1-F^m(v+y^m|x^m))}+\ds\sum_{r=0}^{n}\frac{f^r(y^r|x^r)}{(1-F^r(v+y^r|x^r))}P_{t,x,y}(\ell(t)=l)\\
		&=& \ds\sum_{r=0}^{n} f_{\tau^r}(0|x^r,y^r)P_{t,x,y}(\ell(t)=l)-f_{\tau^l}(0|x^l,y^l).
	\end{eqnarray*}
	Operating $D_{t,y}$ on $F_{\tau^l|l}(T-t|x,y)$
	\begin{eqnarray*}
		D_{t,y}F_{\tau^l|l}(T-t|x,y)&=&\frac{D_{t,y}\digamma_{T-t}(x,y)}{\digamma_\infty(x,y)}-\frac{\digamma_{T-t}(x,y)D_{t,y}\digamma_\infty(x,y)}{\digamma^2_\infty(x,y)}\\
		&=& -\frac{\digamma'_0(x,y)}{\digamma_\infty(x,y)}+\frac{\digamma_{T-t}(x,y)\digamma'_0(x,y)}{\digamma^2_\infty(x,y)}\\
		&=& f_{\tau^l|l}(0|x,y)(F_{\tau^l|l}(T-t|x,y)-1).
	\end{eqnarray*}
	This completes the proof of (iv).
	
	\noi (v) Follows from a direct calculation.
	
	\qed

	{\bf Acknowledgement:} The authors are grateful to Mrinal K. Ghosh and Anup Biswas for very useful discussions.
	
	\bibliographystyle{plain}

\end{document}